\documentclass[11pt,a4paper]{article}

\usepackage{cite}
\usepackage{amsmath,amssymb}
\usepackage{amsthm}
\usepackage{microtype}
\usepackage{framed}
\usepackage[top=1in, bottom=1in, left=1in, right=1in]{geometry}

\usepackage{array}
\usepackage{graphicx}
\usepackage{float}
\usepackage{url}
\usepackage{hyperref}
\usepackage{multirow}
\usepackage{color,soul}
\usepackage{xspace}
\usepackage{tabularx}
\usepackage{subcaption}

\usepackage{enumitem}
\usepackage{algorithm}
\usepackage{framed}
\usepackage[noend]{algpseudocode}


\algblockdefx{AtState}{END}[1]{\underline{In state {\sf #1}:}}{}
\algblockdefx{AtStateBreakLine}{END}[2]{ \underline{In state {\sf #1}} \newline  \underline{{\sf #2}:} }{}
\algcblockdefx{AtState}{AtState}{END}[1]{\underline{In state {\sf #1}:}}{\vspace{-1.5em}}{}
\algcblockdefx{AtStateBreakLine}{AtStateBreakLine}{END}[2]{ \underline{In state {\sf #1}} \newline  \underline{{\sf #2}:} }{}

\newcommand{\Explore}{\textsc{Explore}\xspace}


\newcommand{\sReached}{\textsf{ReachedElected}\xspace}
\newcommand{\sReaching}{\textsf{ReachingElected}\xspace}


\newcommand{\pMeeting}{\textsl{meeting}\xspace}
\newcommand{\pSameDirMeeting}{\textsl{meetingSameDir}\xspace}
\newcommand{\pOppositeDirMeeting}{\textsl{meetingOppositeDir}\xspace}
\newcommand{\pCross}{\textsl{crossed}\xspace}

\newcommand{\dLeft}{\textit{left}\xspace}
\newcommand{\dRight}{\textit{right}\xspace}

\newcommand{\vTtime}{\ensuremath{Ttime}\xspace}
\newcommand{\vOthers}{\ensuremath{Agents}\xspace}
\newcommand{\vTotOthers}{\ensuremath{TotalAgents}\xspace}

\newcommand{\vEtime}{\ensuremath{Etime}\xspace}
\newcommand{\vEsteps}{\ensuremath{Esteps}\xspace}
\newcommand{\vBtime}{\ensuremath{Btime}\xspace}
\newcommand{\vBPtime}{\ensuremath{BPeriods}\xspace}

\newcommand{\predC}{\emph{Pred}\xspace}

\newcommand{\augmented}{\emph{Logic Ring}\xspace}

\newcommand{\Gathering}{{\sc Gathering}\xspace}

\newtheorem{observation}{Observation}
\newtheorem{property}{Property}


\newtheorem{lemma}{Lemma}

\newtheorem{theorem}{Theorem}

\hyphenation{op-tical net-works semi-conduc-tor}

\begin{document}

%
\title{Gathering in  Dynamic Rings}


\author{Giuseppe Antonio Di Luna\footnotemark[1], \and Paola Flocchini\footnotemark[1], \and Linda Pagli\footnotemark[2],\and Giuseppe Prencipe\footnotemark[2],\and Nicola Santoro\footnotemark[3],\and Giovanni Viglietta\footnotemark[1]}
\def\thefootnote{\fnsymbol{footnote}}
 \footnotetext[1]{\noindent
 School Electrical Engineering and Computer Science, University of Ottawa,
 Canada.}
  \footnotetext[2]{\noindent
  Dipartimento di Informatica, University of Pisa, Italy}
\footnotetext[3]{\noindent
 School of Computer Science, Carleton University, 
Canada.}

\date{}
\maketitle

\begin{abstract}

The    {\em gathering} (or  {\em multi-agent rendezvous}) problem requires a
set  of mobile agents, Êarbitrarily positioned at different  nodes of a network
 to group within finite time at the same  location,
 not fixed in advanced. 
 
 The extensive  existing literature on this problem 
shares the same fundamental assumption:   the  topological structure does not change during the rendezvous or the gathering;
this is true also for those investigations   that  consider faulty nodes.
In other words, they only consider {\em static graphs}.

In this paper we start the investigation of  gathering in {\em dynamic graphs},
that is  networks where 
 the topology  changes continuously and 
 at unpredictable locations.

We study the feasibility of  gathering  mobile  agents, identical and without explicit communication capabilities,   in
 a {\em dynamic ring}
of  anonymous nodes;
the class of dynamics we consider is the
classic  {\em 1-interval-connectivity}.
We focus on  
 the impact that   factors such as    {\em chirality} (i.e., a  common sense of orientation) and
  {\em cross detection} (i.e., the ability to detect, when traversing an edge, whether some agent is traversing it in the other direction), 
   have on the solvability of the problem; and
we establish several results. 

We  provide
 a complete characterization of the classes of initial
configurations from which  the gathering problem is solvable in presence and in absence of cross detection
and of chirality.
The feasibility results of the characterization 
 are all constructive: we provide distributed algorithms that allow 
the agents to gather  within low polynomial time. In particular, the protocols  for gathering with cross detection
are time {optimal}.  

We  also  show  that  cross detection is a
 powerful computational element. We prove that,
   without chirality,  
 knowledge of the ring size is strictly more powerful 
than knowledge of the number of agents;
 on the other hand, with chirality, knowledge of n can be substituted by knowledge of k, 
 yielding the same classes of feasible initial configurations.

From our investigation it follows that, for the gathering problem,
 the computational
obstacles created by the dynamic nature of the ring
can be overcome by the presence of chirality or of cross-detection.

 \end{abstract}
\newpage

\section{Introduction}

\subsection{Background and Problem}

 The  {\em gathering} problem 
 requires a set of $k$ mobile computational entities, dispersed at different
 locations in the spacial universe they inhabit, to group within finite time at the same  location,
 not fixed in advanced. 
 This problem
  models
 many situations that arise in the
 real world, e.g., searching for or regrouping animals, people,
 equipment, and vehicles, 
 
This problem, known also as {\em multi-agent rendezvous}, has been intesively and extensively studied 
in a variety of fields, including operations research
(e.g.,   \cite{AlG03}) and  control (e.g., \cite{LiMA07}),
the original focus  being  on the    {\em rendezvous} problem, i.e. the special case $k=2$.

In distributed computing, this problem has been extensively studied both in continuous 
and in discrete domains. In the {\em continuous} case, both  the  gathering and
the rendevous problems have  been investigated in the context of swarms of autonomous 
mobile {\em robots}  operating in one- 
 and  two-dimensional spaces,  requiring them to meet at (or converge to)  the same point 
 (e.g., see \cite{CiFPS12,CoP04,Deg+2011,FlPSW05,FlSVY16,PaPV15}).
 
In the {\em discrete} case, the mobile entities, usually called {\em agents}, 
are dispersed in a network modeled as a graph and are  required  to gather at the same node (or at the two sides of the same edge)
and terminate (e.g., see \cite{BarFFS03,DemGKKPV06,DesFP03,DoFPS03,FlKKSS04,KlMP08,KrKM06,KrKM10,KrKSS03,TaZ14,YuY96}).
 The main obstacle for solving the problem
 is {\em symmetry}, which can occur
 at several levels
(topological structure, nodes,  agents, communication), each playing a key role in the 
difficulty of the problem and of its resolution.
For example, when the  network nodes are uniquely numbered, 
solving the gathering problem is trivial.  
On the other hand,
when the network nodes are anonymous, 
 the network topology is  highly symmetric, the mobile agents are identical,  
 and there is no means of communication, the problem is clearly 
 impossible
to solve by deterministic means.
The quest has been for  minimal  empowering assuptions which would make the problems
deterministically solvable.

A very common assumption is for the agents to have distinct {\em identities}
(e.g.,  see \cite{CzLP12,DemGKKPV06,DesFP03,YuY96}).
  This  enables
 different agents to execute  different deterministic algorithms; under such an
 assumption, the problem becomes solvable, and the focus is on the
 complexity of the solution.

An alternative type of assumption consists in
 empowering the agents with some minimal form of {\em explicit communication}.
 In one approach, this is achieved by having a whiteboard at each node
 giving the agents
 the ability to leave notes in each node they
travel (e.g., \cite{BarFFS03,ChDS07,DoFPS03}); in this case,
 some form of gathering can occur even in presence of some faults
\cite{ChDS07,DoFPS03}.   A less explicit and more primitive form of
communication is by 
endowing each agent with a constant number of  movable tokens, i.e. pebbles  that can be
placed on nodes, picked up, and carried while moving (e.g., \cite{CzDKK08}).

The less demanding assumption is that of  having 
 the homebases (i.e., the nodes where the agents are initially located)
identifiable by a mark, identical for all homebases, and
visible to any agent passing by it. This 
  setting is clearly much less demanding that  agents having identities  or
explicit communication;  originally suggested in
\cite{BasG91}, it has been used and studied
e.g., in  \cite{FlKKSS04,KrKSS03,Sawchuk04}.

Summarizing, the existing literature on the gathering and rendezvous problems
is extensive and the variety of
assumptions and results is aboundant (for  surveys see \cite{KrKM10,Pe12}).
However, regardless of their differences, all these investigations  
share the same fundamental assumption that the  topological structure does not change during the rendezvous or the gathering;
this is true also for those investigations   that  consider faulty nodes (e.g., see \cite{BoDD16,ChDS07,DoFPS03}).
In other words, they only consider {\em static graphs}.



Recently, within distributed computing,  researchers     started to investigate 
  {\em  dynamic graphs}, that is graphs where the topological changes are not
 localized and sporadic; on the contrary,  the topology  changes continuously and 
 at unpredictable locations, and  these changes are not anomalies (e.g., faults) but rather
  integral part of the nature of the system \cite{CaFQS12,KuO11}.

%
%

The study of distributed computations in  
highly dynamic graphs 
has concentrated on  problems of information diffusion, 
 reachability,  agreement, and exploration (e.g., \cite{Marge,IlKW14,IlW13,BiRSSW15,CaFMS14,HaK12,KuLO10,KuMO11}). 

In this paper we start the investigation of  gathering in dynamic graphs
by studying the feasibility of
this problem  in {\em  dynamic rings}.
Note that  rendezvous and gathering in a ring, the prototypical   symmetric graph,  have been
intesively studied in the static case (e.g., see the  monograph on the subject \cite{KrKM10}).
The presence, in the static case, of a mobile faulty agent that can block other agents, considered in  
\cite{DasLMMS16,DasLM15}, could be seen as inducing a particular form of dynamics.
Other than that,  nothing is known on gathering in dynamic rings.

\subsection{Main Contributions}

In this paper,  we study gathering of $k$  agents, identical and without communication capabilities,  in
 a dynamic ring
of $n$ anonymous nodes with identically marked homebases. The class of dynamics we consider is the
classic  {1-interval-connectivity} (e.g.,  \cite{DilDFS16,HaK12,KuLO10,KuMO11}); that is, the system is fully synchronous and  
under a (possibly unfair) adversarial schedule that,
at each time unit,  chooses which edge (if any) will be missing. Notice that this setting is not reducible to the one considered in 
 \cite{DasLMMS16,DasLM15}.

In this setting, we investigate under what conditions the gathering problem is
solvable. 
In particular, we focus on  
 the impact that   factors such as    {\em chirality} (i.e.,  common sense of orientation) and
  {\em cross detection} (i.e., the ability to detect, when traversing an edge, whether some agent is traversing it in the other direction), 
   have on the solvability of the problem.
Since, as we prove, gathering at a single node cannot be guaranteed  in a dynamic ring, 
we allow gathering to occur either at the same node, or at the two end nodes of the same link.


A main result of our investigation is the
 complete characterization of the classes  ${\cal F}(X,Y)$  of initial
configurations from which the gathering problem is solvable
with respect to chirality ($X\in\{\tt{chirality}, \neg\tt{chirality}\}$) 
and cross detection ($Y\in\{\tt{detection}, \neg\tt{detection}\}$).

In obtaining this characterization, we  establish several interesting results. For example, we show that,
without chirality,  cross detection is a
 powerful computational element; in fact, we prove (Theorems \ref{NoChirCross} and \ref{th:nocrossimposs}):
$${\cal F}(\neg\tt{chirality},\neg\tt{detection}) \subsetneq  {\cal F}(\neg\tt{chirality},\tt{detection})$$
Furthermore, in such systems 
 knowledge of the ring size $n$ cannot be substituted by knowledge of the number of agents $k$ (at least one of $n$ and $k$ must be known
 for gathering to be possible); 
in fact, we prove that  with cross detection but without chirality,  knowledge of $n$ is strictly more powerful 
than knowledge of $k$.


On the other hand, 
we show that,  with chirality, knowledge of $n$ can be substituted by knowledge of $k$,
yielding the same classes of feasible initial configurations.
Furthermore, with chirality, cross detection
is no longer a computational separator; in fact (Theorems \ref{lbl:th1} and \ref{lbl:th2})
$${\cal F}(\tt{chirality},\neg\tt{detection}) =  {\cal F}(\tt{chirality},\tt{detection})$$

We also observe that 
$${\cal F}_{static} = {\cal F}(\tt{chirality}, *) =  {\cal F}(\neg\tt{chirality},\tt{detection})$$
where ${\cal F}_{static}$ denotes   the set  of initial
configurations from which gathering is possible in the static case. In other words:
{\em with chirality or with cross detection, it is possible to overcome the computational
obstacles created  by the highly dynamic nature of the system}.

All the feasibility results of this characterization are  constructive: for each situation, we provide a distributed algorithm that allows
the agents to gather  within low polynomial time. 
In particular, the protocols for gathering with cross detection,
terminating in $O(n)$ time, are time {\em optimal}.
Moreover,  our algorithms are {\em effective}; that is, starting from any arbitrary configuration 
$C$ in a ring  conditions $X$ and $Y$, within finite time the agents   determine whether or not  $C\in {\cal F}(X,Y)$ is feasible,
and  gather if it is. See Figure \ref{table} for a summary of some of the results and the sections where they are established.

\begin{figure}[H]
\center
\begin{tabular}{ l | c | c |}
\cline{2-3}
              & \tt{no chirality} & \tt{chirality}  \\
\hline
\multicolumn{1}{ |c|  }{ \multirow{3}{*}{\tt{cross detection}} }& feasible:  ${\cal C} \setminus {\cal P}$ & feasible: ${\cal C} \setminus {\cal P}$   \\
  \multicolumn{1}{ |c|  }{\phantom{A}}      & time:  ${\cal O}(n)$ & time:  ${\cal O}(n)$   \\
  \multicolumn{1}{ |c|  }{\phantom{A}}              & (Sec. \ref{CrossNoChi}) & (Sec. \ref{CrossChi})  \\
        \hline
\multicolumn{1}{ |c|  }{ \multirow{3}{*}{\tt{no cross detection}    } }         & feasible: ${\cal C} \setminus ({\cal P} \cup {\cal E})$ &  feasible: ${\cal C} \setminus {\cal P}$  \\
    \multicolumn{1}{ |c|  }{\phantom{A}}     &time:   ${\cal O}(n^2)$ &time:   ${\cal O}(n \log n)$   \\
           \multicolumn{1}{ |c|  }{\phantom{A}}              &  (Sec. \ref{NoCrossNoChi}) &  (Sec. \ref{NoCrossChi})   \\
                     \hline
\end{tabular}
\caption{\small Each entry $(X,Y)$ shows the set ${\cal F}(X,Y)$ of feasible configurations,  and the time complexity of the gathering algorithm, where:
$X\in\{\tt{chirality},\neg\tt{chirality}\}$; $Y\in\{\tt{detection},\neg\tt{detection}\}$; and ${\cal C}$, ${\cal P}$ and  ${\cal E}$ are the set 
 of all possible  configurations, of the periodic  configurations, and of the configurations with an unique symmetry axis passing 
 through edges of the ring, respectively.}  \label{table}
\end{figure}

%
%
%
%
%


\section{Model and Basic Limitations}

\subsection{{\bf Model and Terminology}}
Let  ${\cal R} = (v_0,\ldots v_{n-1})$  be  
a synchronous dynamic ring where, 
at any time
step $t\in {\sc N}$,  one of its edges might not be present; the choice of which edge is missing (if any) is
controlled by an adversarial scheduler, not restricted by fairness assumptions.
Such a dynamic network is   known in the literature  as a {\em
1-interval connected}  ring.

Each node $v_i$  is connected to
its two neighbours $v_{i-1}$ and $v_{i+1}$ via distinctly  labeled ports
  $q_{i-}$ and   $q_{i+}$, respectively (all operations on
the indices are modulo $n$); the labeling of the ports is arbitrary and thus might not provide a
globally consistent  orientation. 
Each port of $v_i$ has an {\em incoming} buffer and an {\em outgoing} buffer. 
Finally, the  nodes are {\em anonymous} (i.e., have no distinguished
identifiers).  

\paragraph{Agents.}
Operating in  ${\cal R}$ is a  set ${\cal A}=\{a_{0},\ldots,a_{k-1}\}$  of
computational entities, called agents, each provided with  memory and computational capabilities.
The agents are {\em anonymous} 
(i.e., without distinguishing identifiers)
and all execute the same protocol. 

When in a node $v$, an agent can be {\em at} $v$ or  in one of the  port buffers. 
Any number of agents can
be in  a node at the same time;  an agent can determine how many other agents 
are in its location and where (in incoming buffer, in outgoing buffer, at the node).
Initially the agents are located  at   distinct   locations, called homebases; nodes that are homebases are
specially marked so that each agent can determine
 whether or not the current node is  a homebase.
 Note that, as discussed later, this assumption is necessary in our setting.

Each agent $a_j$  has a consistent private orientation $\lambda_j$ of the ring
which designates each port either {\em left} or {\em right}, with
$\lambda_j(q_{i-})=\lambda_j(q_{k-})$, for all  $0\leq i,k<n$.
The orientation of the agents might not be the same.  
If all agents agree on the orientation,  
we say that there is {\em chirality}.

The agents are {\em silent}: they  not have any explicit
communication mechanism.

The agents are {\em mobile}, that is they can move from node to
neighboring node.
More than one agent  may move on the same edge in the same direction in the same round. 

We say that the system has {\em cross detection} if whenever  two or more agents move in opposite directions on the same edge in the same 
round, the involved agents detect this event;  however they do not necessarily know the number of the involved agents in either direction.


\paragraph{Synchrony and Behavior.}
The system operates in {\em  synchronous} time steps, called {\em
rounds}.  
In each round, every agent is in one of a finite set of system states ${\cal S}$
which includes two special states: the initial state {\sf Init}  and the terminal state {\sf Term}.


At the beginning of a round $r$,  an  agent $a$ in  $v$ executes its  protocol (the same for all
agents).
Based on the number of agents at $v$ and in its buffers, and on the content of its
local memory and its state, it  determines whether or not to move  and, if so, in which direction
($direction\in \{left,right,nil\}$).

If $direction=nil$, the agent places itself at $v$ (if currently on a port).
If $direction \neq nil$, the agent moves  in the  outgoing buffer of the corresponding  port (if not already there);
 if the link is present, it arrives in the incoming buffer 
of the corresponding port of the destination node in round $r+1$; otherwise 
the agent does not leave the 
outgoing buffer.
As a consequence,  an agent can be in an outgoing
 buffer at the beginning of a round only when the corresponding link is not present.
 
In the following, when an agents is in an outgoing buffer that leads to the missing edge, we will say that the agent is {\em blocked}.



%

When multiple agents are at the same node,  all of them have 
the same direction of movement,  and are in the same state, we say that they form a {\em group}. 

 \paragraph{Problem Definition.}
Let  $({\cal R}, {\cal A})$ denote a system so defined. In  $({\cal R}, {\cal A})$,  gathering  is achieved in round $r$  if all agents in $A$  are
on the same node or on two neighbouring nodes in $r$;
in the first case,  gathering is said to be  {\em strict}.

An algorithm {\em solves}  \Gathering if, starting from any  configuration from which gathering is possible, 
within finite time all agents are in the terminal state,  are gathered, and  are aware that gathering has been achieved.

A solution algorithm  is {\em effective}  if  starting from any  configuration from which gathering is  {\em not} possible, 
within finite time all agents  detect such  impossibility.

\subsection{Configurations and Elections} 
The locations of the $k$ home bases in the ring is called a {\em configuration}.
Let  ${\cal C}$ be the set of all possible configurations with $k$ agents. 
Let $h_0, \ldots, h_{k-1}$ denote the nodes corresponding to the marked homebases (in a clockwise order) in  $C\in{\cal C}$.
We shall indicate by $d_i $ ($ 0 \leq i \leq k-1$) the distance (i.e., number of edges)  
between $h_i$ and $h_{i+1}$ (all operations are modulo $k$).
Let $\delta^{+j }$ denote the {\em inter-distance} sequence clockwise 
$\delta^{+j } = <d_j,d_{j+1} \ \ldots d_{j+k-1}>$, and let $\delta^{-j}$ denote
the couter-clockwise  sequence $\delta^{-j} = <d_{j-1}\ \ldots\ d_{j-(k-1)}>$.  
The unordered
pair of inter-distance sequences $\delta^{+j}$ and $\delta^{-j}$ describes   the    configuration
from the point of view of  node $h_j$.  

A  configuration is {\em periodic} with period $p$ (with $p | k$) if $\delta_i = \delta_{i+p}$ for all $i=0, \ldots k-1$.
Let ${\cal P}$ denote the set of periodic configurations.

Let $\Delta^+=\{\delta^{+j}:0 \leq j <k-1 \}$ and $\Delta^- =\{ \delta^{-j}: 0 \leq j <k-1\}$.
We will denote by $\delta_{min}$ the ascending lexicographically minimum
sequence in $\Delta^+\cup \Delta^-$. 
Among the non-periodic configurations,  particular ones are the  {\em double-palindrome} configurations, where   $\delta_{min} = \delta^{+i}  =  \delta^{-j} $ with $i\neq j$, where
it is easy to see that the two  sequences between the corresponding home bases  $h_i$ and $h_j$
are both palindrome.  A double-palindrome configuration has thus a unique axis of symmetry, equidistant from $h_i$ and $h_j$.
If such an axis passes through two edges (i.e., the distances between $h_i$ and $h_j$  are both odd),  we say that the configuration is {\em edge-edge}, and we denote by  ${\cal E}$   the set of edge-edge configurations. 
 
For example, let $k=4$ and $h_0, h_1, h_2, h_3$ be the four home bases with $d_0=3$, $d_1=4$, $d_2=5$, $d_3=4$. 
In this case we have   $\delta_{min} = \delta^{+0 } = \delta^{-1} = <3,4,5,4>$ and the unique axis of symmetry passes through two edges (one half-way between $h_0$ and $h_1$, the other half-way between $h_2$ and $h_3$).

A characterization of the configurations where a leader can be elected depending on chirality   is  well known in static rings.

\begin{property}
\label{election}
In a static ring without chirality, a leader node can be elected  from configuration $C$  if and only if   $C \in {\cal C}  \setminus   ({\cal P} \cup {\cal E})$; a leader edge can be elected if   and only if $C \in  {\cal C}  \setminus  {\cal P}  $.  \\
With chirality, a leader node can be elected   if and only if     $C \in {\cal C}  \setminus  {\cal P}  $.
\end{property}

Consider a ring without chirality. The {\em canonical} way to elect a leader from
 configuration   $C \in {\cal C}  \setminus   ({\cal P} \cup {\cal E})$  is described below.
If $C$  is asymmetric, the leader is the unique homebase that starts the lexicographically smallest inter-distance sequence.
If $C$ is double-palindrome, let $h$ and $h'$ be     the
 two homebases that start (in opposite direction) the two identical lexicographically smallest sequences:
if $C \in {\cal E}$  the leader edge  is the edge in the middle of the shortest portion of the ring delimited by  $h$ and $h'$ (note that both portions have odd distance and there is a central edge); otherwise  ($C \notin {\cal E}$) at least one of the two portions of the ring 
between $h$ and $h'$  has even distance and a central node is identified as the leader.

\subsection{{\bf Basic Limitations and Properties} \label{syncimp}}

Observe  that, in our setting, it is necessary for the homebases to be distingushable from the other nodes.

 \begin{property}
 If the homebases are not   distinguishable from the other nodes, then \Gathering  is unsolvable  in  $({\cal R}, {\cal A})$; this holds  regardless of chirality, cross detection,  and knowledge of $k$ and $n$.
\end{property}
 \begin{proof}
Let the  homebases be  not distinguishable from the other nodes  in   $({\cal R}, {\cal A})$.  To prove the property, it is sufficient to consider an execution in which all the entities have
the same chirality and no link ever disappears. Because of anonymity of the nodes and of the agents, and since homebases are not marked as such, in each round 
all agents will perform exactly the same action (i.e., stay still or move in the same direction). Thus the distance between  neighbouring agents will never change,
and hence  gathering will never take place if $k>2$. For $k=2$, by choosing the initial distance between the two agents to be grater than one,
 the same argument leads to the same result.
 \end{proof}

\noindent Thus, in the following we assume that the homebases are identical but distinguishable from the other nodes. 

An obvious, very basic limitation that holds even if the ring is static is the following.

 \begin{property}\label{nORk}
 In  $({\cal R}, {\cal A})$, if neither $n$ nor $k$  are known, then \Gathering   is  unsolvable; this holds  regardless of chirality and cross detection.
\end{property}

\noindent Hence at least one of $n$ or $k$ must be known.

An important limitation follows from the dynamic nature of the system:
 
 \begin{property}\label{obs-noNodeGathering}
 In  $({\cal R}, {\cal A})$, strict \Gathering   is unsolvable; this holds  regardless of chirality, cross detection,  and knowledge of $k$ and $n$.
\end{property}
\begin{proof}
Consider the following strategy of the adversarial scheduler. 
It selects two arbitrary agents, $a$ and $b$; at each round,  
the adversary will not remove any edge, unless the two agents would meet in the next step.  More precisely,
if the two agents would meet by both independently moving on different edges $e'$ and $e''$ leading to the same vertex,
then the adversary removes one of the two edges;
if instead  one agent is not moving from its current node $v$, while the other is moving on an edge $e$ incident to $v$, the adversary removes edge $e$.
This strategy ensures  that $a$ and $b$ will never be at the same node at the same time.
\end{proof}

\noindent Hence, in the following we will not require gathering to be strict. 

An obvious but important  limitation, inherent to the nature of the problem, holds even in static situations:

\begin{property} 
\label{th:noPeriodic}
\Gathering is unsolvable if the initial configuration $C\in {\cal P}$; this holds  regardless of chirality, cross detection,  and knowledge of $k$ and $n$.
\end{property}
 \begin{proof}
It is sufficient to consider an execution in which all the entities have the same chirality and no link ever disappears. Depending on their initial positions,
the agents can be partitioned  into $k/p$ congruent classes, where $p$ is the period of the initial configuration, each composed of $p$ agents.
In each round,
all agents of the same class  will perform exactly the same action (i.e., stay still or move in the same direction) based on the same observation.
 Thus the distance between   two consecutive agents of the same class will never change;
 hence  gathering will never take place.
 \end{proof}

\noindent Hence, in the following we will focus on initial configurations not in ${\cal P}$.


\section{General Solution Stucture}

The  solution algorithms for gathering have the same general structure, and use the same building block and variables.

\paragraph{General Structure.}
All the algorithms are divided in two phases.  The goal of {\sf Phase 1} is for  the agents to explore the ring. In doing so, they may happen to solve \Gathering as well.
If they complete   Phase 1 without gathering, the agents are able to elect a node or an edge (depending on the specific situation) and  the algorithm proceeds to {\sf Phase 2}. In Phase 2 the agents try to gather around the elected node (or edge); however, gathering  on that node (or edge) might not be possible due to the fact that the agents cannot count on the presence of all edges at all times. Different strategies are devised, depending on the setting, to guarantee that in finite time the problem is solved in spite of the choice of schedule of missing links decided by the adversary.
For each setting, we will describe the two phases depending on the availability or lack of   cross detection, as well as on the presence or not of  chirality.
Intuitively, cross detection is useful to simplify termination in Phase 2,  chirality helps in breaking symmetries.

\paragraph{Exploration Building Block.}
At each round, an agent evaluates a set of predicates: depending on the result of this evaluation, it chooses a direction of movement and possibly a new state. In its most general form, the  evaluation of the predicates occurs through the building block procedure \Explore($dir$ $|$ $p_1:s_1$; $p_2:s_2$; \dots; $p_h:s_h$),
where $dir$ is either \dLeft~or \dRight,  $p_i$ is a predicate,  and $s_i$ is a state. 
In Procedure  \Explore,  the agent evaluates the predicates $p_1, \ldots, p_h$ in order;
as soon as a predicate is satisfied, say $p_i$, the procedure exits and the agent does a transition to the specified state, say $s_i$. If no predicate is satisfied, the agent tries to move in the specified direction $dir$ and the procedure is executed again in the next round. 
In particular, the following predicates are used:
\begin{itemize}
\item \pMeeting, satisfied when the agent (either in a port or at a node) detects an increase in the numbers of agents it sees at each round.
\item \pSameDirMeeting, satisfied when the agent detects, in the current round, new agents moving in its same direction. This is done by seeing new agents in an incoming or outgoing buffer corresponding to a direction that is equal to the current direction of the agent. 

\item \pOppositeDirMeeting, satisfied when the agent detects, in the current round, new agents moving in its opposite direction. This is done by seeing new agents in an incoming or outgoing buffer corresponding to a direction that is opposite to the current direction of the agent. 
\item \pCross, satisfied when the agent, while traversing a link, detects in the current round other agent(s) moving on the same link in the opposite direction.

\item $\emph{ seeElected}$: let us assume there is either an elected node or an elected edge. This predicate is satisfied when the agent has reached the elected node or an endpoint of the elected edge.
\end{itemize}

Furthermore, the agents keeps six variables during the execution of the algorithm. Two of them are never reset during the execution; namely:
\begin{itemize}
\item \vTtime: the total number of rounds since the beginning of the execution of the algorithm (initially set to $0$);
\item \vTotOthers: the number of total agents (initially set to $0$). This variable will be set only after the agent completes a whole loop of the ring, and will be equal to $k$.
\end{itemize}

Other four variables are periodically reset; in particular: 

\begin{itemize}
\item $r_{ms}$: it stores the last round when the agent meets someone (at a node) that is moving in the same direction (initially set to $0$); this value is updated each time a new agent is met, and it is reset at each change of state or direction of movement;  
\item \vBtime: the number of rounds the executing agent has been blocked trying to traverse a missing edge since $r_{ms}$. This variable is reset to $0$ each time the agent either
traverses an edge or changes direction to traverse a new edge;
\item \vEtime, \vEsteps: the total number of rounds and edge traversals, respectively. These values are reset at each new call of procedure \Explore or when $r_{ms}$ is set.
\item \vOthers: the number of agents at the node of the executing agent. This value is set at each round.
\end{itemize}

\section{Gathering With Cross Detection}

In this section,  we study gathering  in dynamic rings   when  there is cross detection; that is,  
 an agent crossing a link  can detect whether other agents are crossing it  in the opposite direction. 
 Recall that, by Property \ref{nORk}, at least one of $n$ and $k$ must be known.
 
We  first examine the problem  without chirality and show that, with knowledge of $n$,
it is sovable  in all  configurations that are feasible in the static case; furthermore, this is done in
optimal time $\Theta(n)$. On the other hand, with  knowledge of $k$ alone, 
the problem is unsolvable.

We  then examine the problem with chirality, and show that in this case the problem
 is sovable  in all  configurations that are feasible in the static case even with   knowledge of $k$ alone;
furthermore, this is done in
optimal time $\Theta(n)$.

\subsection{With Cross Detection: Without Chirality}\label{CrossNoChi}

In this section, we present and analyze the algorithm, {\sc Gather(Cross,$\not$Chir)}, that 
solves \Gathering  in rings of known size with cross setection but without chirality.


The  two phases of the algorithm are described and analyzed below.\\

\subsubsection{Algorithm {\sc Gather(Cross,$\not$Chir)}: {\sf Phase 1}}\label{ph1:knownnochir}

The overall idea of this phase, shown in Figure~\ref{algp1:gatheringnochircross}, is to let the agents move long enough along the ring to guarantee that, if they do not gather, they all manage to fully traverse the ring in spite of the link removals.

 More precisely,  for the first $6n$ rounds each agent  attempts to move  to the left (according to its orientation). 
 At round $6n$, the agent checks if the predicate \predC$\equiv (r_{ms} < 3n \land \vEsteps < n)$ is verified. 
If \predC is not verified, then (as we show)  the agent has explored the entire ring and thus
knows  the total number $k$ of  agents  (local variable \vTotOthers);
 in this case, the agent switches direction, and enters state {\sf SwitchDir}.
 Otherwise, if after $6n$ rounds \predC is satisfied, then $k$ is not known yet: in this case, the agent keeps the same direction, 
and enters state {\sf KeepDir}.

\begin{figure}[H]
\footnotesize
\begin{framed}
\begin{algorithmic}
\State States: \{{\sf Init}, {\sf SwitchDir}, {\sf KeepDir}, {\sf Term}\}.
\AtState{Init}
    \State \Explore({\dLeft$|$ \vTtime$=6n \,\, \land \,\,  \neg \predC$: {\sf SwitchDir}; $\vTtime=6n \,\, \land \,\, \predC$: {\sf KeepDir}})
\AtState{SwitchDir}
    \State \Call{Explore}{\dRight $|$ $ \vTtime = 12n \land r_{ms} < 9n \,\, \land \,\, \vEsteps < n \,\, \land \,\, \vOthers = \vTotOthers \land \neg \pOppositeDirMeeting$:  {\sf Term};  $\vTtime = 12n$: {\sf Phase 2}}
\AtState{KeepDir}
    \State \Explore({\dLeft$|$ $\pCross \,\, \lor \,\, \pOppositeDirMeeting$: {\sf Term}; $\vTtime = 12n \,\, \land \,\, r_{ms} < 9n \,\, \land \,\, \vEsteps < n$:  {\sf Term};  $\vTtime = 12n$: {\sf Phase 2}})
    
  \END
\end{algorithmic}
 \predC$\equiv [r_{ms} < 3n \land \vEsteps < n]$
\begin{center}Protocol  {\sc Gather(Cross,$\not$Chir)}, Phase 1.  \end{center}

\end{framed}
\caption{{\sf Phase 1} of  Algorithm 
{\sc Gather(Cross,$\not$Chir)}  
\label{algp1:gatheringnochircross}}
\end{figure}


In state {\sf SwitchDir}, the agent
attempts to move  in the chosen direction until round $12n$. 
 At round $12n$, the agent terminates if the predicate $[r_{ms} < 9n \land \vEsteps< n]$ holds,   predicate \pOppositeDirMeeting does not hold, and
  in its current node there are $k$ agents; otherwise, it starts {\sf Phase 2}.

In state  {\sf KeepDir},  if at any round predicate $\pCross$ or predicate $\pOppositeDirMeeting$ hold,
the agent   terminates; otherwise, it attempts to move to its left until round $12n$. At round $12n$, if the predicate $[r_{ms} < 9n \land \vEsteps< n]$ holds, 
the agent
 terminates; otherwise, it switches to {\sf Phase 2}.

We now prove some important properties of {\sf Phase 1}. 

\begin{lemma}\label{observationmovements}
Let agent $a^*$  move less than $n$ steps in the first $3n$ rounds.
Then, by round $3n$, all agents moving in the same direction as 
$a^*$ belong to the same group.
\end{lemma}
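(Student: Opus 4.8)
The plan is to analyze the motion of an arbitrary agent $a^*$ that moves fewer than $n$ steps during the first $3n$ rounds, and to show that every other agent moving "left" (in the sense of $a^*$'s orientation — or rather, of the global direction $a^*$ is heading) must, by round $3n$, coincide with $a^*$ in node, direction, and state. I would first set up notation: index the nodes so that $a^*$ is heading in the direction of decreasing index, say, and let $x(r)$ denote the number of steps $a^*$ has taken by round $r$. Since $a^*$ is blocked whenever it fails to advance, and since the scheduler removes at most one edge per round, $a^*$ is blocked in round $r$ only if the unique missing edge is exactly the one in front of $a^*$. The key quantitative observation is: in any window of rounds, the number of rounds $a^*$ is blocked is at most the number of rounds the missing edge sits in front of it, and since $a^*$ takes fewer than $n$ steps in $3n$ rounds, it is blocked in more than $2n$ of those rounds.

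The heart of the argument is a "sandwiching" claim about any other agent $b$ that is (at some point during the first $3n$ rounds) moving in the same direction as $a^*$. Consider such a $b$ and look at its position relative to $a^*$ along the ring. I would argue that $b$ cannot remain strictly behind $a^*$ (on the side $a^*$ came from) forever, because whenever $a^*$ is blocked the missing edge is in front of $a^*$ — hence not behind it — so $b$, moving in the same direction, is free to advance toward $a^*$ and will catch up; and $b$ cannot get strictly in front of $a^*$ and stay there, because for $b$ to pull ahead it would have to traverse the edge in front of $a^*$ during a round when that edge is present, but then $a^*$ would traverse it too (in the same round or the next), so $b$ never gains a net lead. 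Making this precise requires a careful round-by-round bookkeeping: essentially, the gap between $a^*$ and $b$ (measured along the ring in $a^*$'s direction of travel) is nonincreasing and must hit zero within the budget of $2n$-plus blocked rounds, since each blocked round of $a^*$ either closes the gap by one or leaves it at zero. Once $a^*$ and $b$ are co-located, the \Explore building block forces them to make identical decisions — same predicates evaluate the same way, since they see the same thing — so from that round on they stay together and form a group.

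I expect the main obstacle to be the bookkeeping in the sandwiching step, specifically handling the port-buffer mechanics correctly: an agent that decides to move spends a round in the outgoing buffer before arriving in the next node's incoming buffer, and "being blocked" means sitting in that outgoing buffer across a round boundary. So "co-location" and "the missing edge is in front of $a^*$" need to be stated in terms of buffer occupancy, not just node occupancy, and one must check that a trailing agent $b$ and $a^*$ genuinely merge (same node, same buffer, same state) rather than leap-frogging through the buffers in a way that keeps them one step apart. I would resolve this by tracking the pair $(\text{node}, \text{buffer})$ of each agent and verifying the gap-monotonicity at the buffer level, using the fact that the "same direction" hypothesis means both agents, when unblocked, advance in lockstep. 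A secondary subtlety is ensuring the conclusion covers \emph{all} same-direction agents simultaneously: once each such agent has merged with $a^*$, transitivity of "being in the same group" gives that they are all in one group — but one should note that an agent merging with $a^*$ at round $r' < 3n$ stays merged through round $3n$, which again follows from the determinism of \Explore once co-located, together with the observation that after merging the group's subsequent blocked rounds are still counted against the same $2n$ budget so nothing can separate them.
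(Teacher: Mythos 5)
Your overall strategy is the same as the paper's: since $a^*$ takes fewer than $n$ steps in $3n$ rounds, it is blocked in more than $2n$ of them; in each such round the single missing edge is the one in front of $a^*$, so every same-direction agent not already co-located with $a^*$ advances toward it; and once co-located, determinism of \Explore (same state, same observations) keeps them together. However, one step of your bookkeeping is wrong as stated: the gap from $b$ to $a^*$, measured along the direction of travel, is \emph{not} nonincreasing. In a round where $a^*$ advances, the missing edge (there is at most one) need not be in front of $a^*$ --- it can be in front of $b$, in which case $b$ is blocked while $a^*$ moves and the gap grows by one. Your ``$b$ cannot pull ahead'' clause addresses overtaking (gap passing through zero), not this mode of the gap increasing while $b$ is still behind, so as written the claim ``each blocked round of $a^*$ either closes the gap by one or leaves it at zero'' does not by itself yield the conclusion.

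The repair is exactly the accounting the paper performs, and it uses the hypothesis you already invoked for the other half: the gap can increase only in a round where $a^*$ actually moves, hence fewer than $n$ times in total; it decreases in every one of the more than $2n$ rounds where $a^*$ is blocked and the gap is positive; and it starts at most $n-1$. Hence $(n-1) + (n-1) - (2n+1) < 0$ forces the gap to reach $0$ by round $3n$, after which co-location is permanent. A further simplification relative to your ``sandwiching'': if you measure the distance from $b$ to $a^*$ cyclically in the direction of travel (an agent ``ahead'' of $a^*$ simply has a larger such distance, up to $n-1$), the fronts/behind case split and the anti-overtaking argument disappear entirely --- every same-direction agent is treated as trailing. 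Your attention to the buffer-level semantics of ``blocked'' and ``co-located'' is a legitimate refinement that the paper glosses over, but it does not change the structure of the argument.
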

\begin{proof}
Let us focus only on the set $A$ of the agents that have the same orientation of the ring as $a^*$.
In the first $6n$ rounds of Phase 1, each agents attempts to move in the same direction.
if there is a round $r\leq 6n$ when $a^*$ is blocked, then every $a\in A$
that at round $r$ is not at the same node of $a^*$ does move, due to the 1-interval connectivity of the ring.
Since  $a^*$ moves less than $n$ steps in the first $3n$ rounds,
then the number of rounds in which $a^*$ is blocked is greater than $2n+1$. 
Thus, all agents in $A$ that are not already
in the same node as $a^*$ 
have moved towards $a^*$ of $2n+1$ steps. 
On the other hand, everytime  $a^*$ moves, the other agents might be blocked; however,  by hypothesis, this has happend 
 less than $n$ times.
 
 Since  the initial distance between $a^*$ and  an agent  in $A$  is at most $n-1$, it follows
 such a distance   increases less than $n$ (due to $a^*$ moving), but it  decreases by $2n+1$ (due to $a^*$ being blocked);
 thus the distance is  zero (i.e., they are at the same node)  by round $3n$.
\end{proof}

Because of absence of chirality, the set ${\cal A}$ of agents can be partitioned into two sets where all the agents in the same set 
share the same orientation of
the ring; let $A_r$ and $A_l$ be the two sets.

\begin{lemma}\label{lemmaC}
Let  $A\in\{A_r,A_l\}$. If at round $6n$ \predC is verified for an agent $a^* \in A$, then all agents in $A$ are in the same group at round $6n$. 
Moreover, \predC is verified for all agents in $A$. 
\end{lemma}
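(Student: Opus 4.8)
The plan is to leverage Lemma \ref{observationmovements} twice, and then to argue that the predicate \predC is in fact a property of the whole group $A$, not just of the single agent $a^*$.

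First I would unpack what it means for \predC $\equiv [r_{ms} < 3n \land \vEsteps < n]$ to hold at round $6n$ for $a^*$. The conjunct $\vEsteps < n$ says that, since its last reset, $a^*$ has traversed fewer than $n$ edges. Since in Phase~1 the variable $\vEsteps$ is reset only at the start of \Explore (round $0$) and when $r_{ms}$ is set, and since $r_{ms} < 3n$ together with the fact that before any meeting $r_{ms}=0$, I need to check that the relevant ``window'' in which $a^*$ has moved fewer than $n$ steps covers the first $3n$ rounds. Concretely: if $a^*$ never met anyone moving in its direction, then $r_{ms}=0$ throughout and $\vEsteps$ counts all of $a^*$'s traversals over $[0,6n]$; in particular $a^*$ moved fewer than $n$ steps in the first $3n$ rounds, so Lemma \ref{observationmovements} applies directly and all of $A$ is one group by round $3n$, hence certainly at round $6n$. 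If instead $a^*$ did meet an agent of $A$ before round $6n$, then at the first such meeting $a^*$ was already in a common node with some $a\in A$; I would then apply Lemma \ref{observationmovements} to whichever agent of $A$ moved the fewest steps — using $\vEsteps<n$ at round $6n$ and the fact that meetings only merge agents and reset counters — to conclude that the merged group in fact contains all of $A$ by round $3n$. The cleanest route is: pick the agent $a'\in A$ minimizing the number of steps taken in the first $3n$ rounds; argue from \predC holding for $a^*$ at round $6n$ that this minimum is less than $n$ (because once agents of $A$ are grouped they move together, so $a^*$'s small step count forces a small step count on the slowest agent); then Lemma \ref{observationmovements} gives that all of $A$ is grouped by round $3n$, and a grouped set stays grouped since all its members are in the same state and move identically.

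For the ``moreover'' part: once all agents of $A$ form a single group by round $3n$, from that round on they all sit at the same node, in the same state, with synchronized variables — in particular they share the same values of $r_{ms}$, $\vEsteps$, $\vEtime$, and \vBtime\ for all rounds in $[3n,6n]$, because these are updated deterministically from the common local view. Hence whatever value of $[r_{ms} < 3n \land \vEsteps < n]$ holds for $a^*$ at round $6n$ holds verbatim for every agent in $A$. So \predC is verified for all of $A$.

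The main obstacle I expect is the bookkeeping around the resets of $r_{ms}$ and $\vEsteps$: I must make sure that ``$\vEsteps<n$ at round $6n$'' really does translate into ``some agent of $A$ moved fewer than $n$ steps in the first $3n$ rounds,'' which is exactly the hypothesis Lemma \ref{observationmovements} needs. The delicate case is when $r_{ms}$ was reset (a same-direction meeting happened) at some round $r_0 \in (0,6n)$, because then $\vEsteps$ at round $6n$ only counts traversals after $r_0$; I need the side fact that $r_{ms}<3n$ forces any such reset to have happened before round $3n$, and that before the reset the agents involved were already colocated, so applying Lemma \ref{observationmovements} to the slowest agent of $A$ over $[0,3n]$ still goes through. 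Handling this case carefully — rather than the final synchronization argument, which is routine — is where the real work lies.
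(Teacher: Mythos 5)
Your proposal follows essentially the same route as the paper's proof: Lemma~\ref{observationmovements} supplies the grouping of all of $A$ with $a^*$, and the observation that $r_{ms}$ (and hence $\vEsteps$) is reset for every agent of $A$ at the common meeting round yields the ``moreover'' part. The bookkeeping subtlety you flag about the reset window of $\vEsteps$ is real but is glossed over by the paper's own two-sentence proof as well; the cleanest resolution (simpler than your ``slowest agent'' detour) is to note that $\vEsteps<n$ at round $6n$ together with $r_{ms}<3n$ means $a^*$ traverses fewer than $n$ edges during the $3n$-round window $[3n,6n]$, to which the blocking argument of Lemma~\ref{observationmovements} applies verbatim since no agent changes direction before round $6n$.
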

\begin{proof}
By definition of \predC and by Lemma~\ref{observationmovements}, at round $6n$ all agents in $A$ are at the same node of $a^*$. 
Also, let $r$ be the first round when all agents in $A$ meet at the same node: by definition, the value of $r_{ms}$ for all agents under consideration is exactly $r$. 
From this observation and since \predC holds for $a^*$, it follows that \predC must be satisfied for all agents in $A$. 
\end{proof}

\begin{lemma}\label{lemmaAB}
Let  $A\in\{A_r,A_l\}$. If \predC is not verified at round $6n$ for  agent $a^* \in A$, then  at round $6n$ all agents in $A$ 
have done a complete tour of the ring (and hence know the number of total agents, $k$); moreover,  \predC is not verified for all agents in $A$. 
\end{lemma}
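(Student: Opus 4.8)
I would split the statement into its two assertions — that \predC fails at round $6n$ for \emph{every} agent of $A$, and that every agent of $A$ has toured the ring by round $6n$ — prove the first cheaply, and then use it to obtain the second.

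The ``\predC fails for all of $A$'' part is immediate: it is the contrapositive of the second assertion of Lemma~\ref{lemmaC}. That lemma guarantees that if \predC holds at round $6n$ for \emph{any} agent of $A$ then it holds for all of them; hence, since \predC fails for $a^*\in A$, it cannot hold for any agent of $A$.

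For the main assertion I would argue by cases on whether some agent of $A$ moves fewer than $n$ steps during the first $3n$ rounds. \emph{Case 1: every agent of $A$ makes at least $n$ steps in the first $3n$ rounds.} During {\sf Phase 1} the agents of $A$ only ever attempt to move in their common direction and never reverse, so $n$ steps amount to a complete loop; thus each agent of $A$ has toured the ring (and set \vTotOthers$\,=k$) already by round $3n$. \emph{Case 2: some agent $a^\dagger\in A$ makes fewer than $n$ steps in the first $3n$ rounds.} Then Lemma~\ref{observationmovements} applies to $a^\dagger$, so all of $A$ forms a single group by round $3n$; moreover, re-running the counting in that lemma (which balances the at most $n-1$ gap-widening moves of $a^\dagger$ against its at least $2n+1$ blocked rounds, starting from a gap of at most $n-1$) shows that the coalescence actually happens at some round $\rho\le 3n-3<3n$. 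From round $\rho$ on, the agents of $A$ travel as one group: all still in state {\sf Init}, moving in lock-step, and never again meeting a same-direction agent — so each of them has $r_{ms}=\rho$, and they share a common value of \vEsteps (reset to $0$ at round $\rho$ when $r_{ms}$ is set, then incremented together, since after $\rho$ no same-direction agent is met and {\sf Phase 1} calls \Explore only once). Now I would invoke the part already proved: \predC fails for the whole group at round $6n$; since $r_{ms}=\rho<3n$ its first conjunct holds, so \vEsteps$\,\ge n$ at round $6n$. Hence the group performed at least $n$ same-direction edge traversals after round $\rho$, i.e.\ completed a full loop, so every agent of $A$ has toured the ring by round $6n$ and knows $k$.

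The part I expect to demand the most care is the bookkeeping in Case~2: making sure the agents of $A$ coalesce \emph{strictly} before round $3n$ (so that the first conjunct of \predC is true and the failure of \predC is forced onto \vEsteps), and that after coalescing the group's $r_{ms}$ and \vEsteps are literally identical across all members and are not perturbed by encounters with the oppositely oriented agents of ${\cal A}\setminus A$. This rests on a careful reading of the reset rules for these variables together with the fact that state {\sf Init} triggers no meeting-dependent transition before round $6n$.
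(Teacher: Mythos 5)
Your proof is correct and takes essentially the same route as the paper's: the ``\predC fails for every agent of $A$'' claim via the contrapositive of Lemma~\ref{lemmaC}, and the tour claim via the blocked-rounds counting of Lemma~\ref{observationmovements}. The only difference is organizational --- the paper proves the tour claim by contradiction (if some agent had not toured, \predC would end up holding for all of $A$, contradicting the hypothesis on $a^*$), whereas you unwind this into a direct case analysis; your version has the merit of making explicit the bookkeeping the paper leaves implicit, namely that coalescence occurs strictly before round $3n$ so that the failure of \predC forces $\vEsteps\ge n$ and hence a completed loop.
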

\begin{proof}
Let us assume by contradiction that there exists $a' \in A$ that has not done a complete tour of the ring after $6n$ rounds; that is, $a'$ has moved less then $n$ steps in the first $3n$ rounds. By Lemma~\ref{observationmovements}, all agents in $A$ are in the same node as $a^*$  by round $r < 3n$. Therefore, \predC would be satisfied for any of the robots in $A$, including $a^*$:  a contradiction.  

To prove the second part of the lemma, note that \predC cannot be satisfied for any agent in $\in A$: in fact, by Lemma~\ref{lemmaC}, this would prevent the existence of an agent in $A$ for which \predC  is not satisfied. Thus, the lemma follows. 
\end{proof}

\begin{lemma}\label{phase1}
If one agent terminates in {\sf Phase 1}, then all agents terminate and gathering has been correctly achieved. Otherwise, no agent terminates and all of them have done a complete tour of the ring. 
\end{lemma}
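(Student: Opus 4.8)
The plan is to do a case analysis based on which of the three states ({\sf SwitchDir} or {\sf KeepDir}) an agent can terminate in, and to argue in each case that (i) the terminating agent is correct that gathering has been achieved, and (ii) every agent terminates. Note first that an agent cannot terminate while in state {\sf Init}: the only transitions out of {\sf Init} are to {\sf SwitchDir} or {\sf KeepDir}, never to {\sf Term}. So termination happens only at round $12n$ (or, in {\sf KeepDir}, possibly earlier via \pCross or \pOppositeDirMeeting). I would use the set partition ${\cal A} = A_r \cup A_l$ and, via Lemmas \ref{lemmaC} and \ref{lemmaAB}, the key dichotomy: for each $A \in \{A_r, A_l\}$, either \emph{all} agents of $A$ verify \predC at round $6n$ (and are co-located, entering {\sf KeepDir}) or \emph{none} do (and all have toured the ring, knowing $k$, entering {\sf SwitchDir}). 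So the two sets behave coherently, and there are essentially three global situations: both sets in {\sf KeepDir}; both in {\sf SwitchDir}; one of each.

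\textbf{Case: some agent $a^* \in A$ terminates in {\sf KeepDir}.} Then by the dichotomy all agents of $A$ are in {\sf KeepDir} and were grouped at round $6n$; since they all move left together thereafter (same orientation), they stay one group. Termination in {\sf KeepDir} happens either because $\pCross \lor \pOppositeDirMeeting$ fires, or at round $12n$ with $[r_{ms} < 9n \land \vEsteps < n]$. In the first sub-case, $a^*$'s group has met the other set $A'$ moving in the opposite direction (cross detection or opposite-direction meeting at a node); I would argue this detection is symmetric, so the whole of $A'$ detects it in the same round; but $A'$ may be in {\sf SwitchDir} rather than {\sf KeepDir}, so I must check that {\sf SwitchDir}'s guard also catches this — indeed \pOppositeDirMeeting appears in {\sf SwitchDir}'s termination guard (negated), and when it holds the agent goes to {\sf Phase 2} instead of {\sf Term}; here the main obstacle is making sure the "Phase 2" branch is not actually reached in this scenario, or that Phase 2 itself then completes — more carefully, I'd argue that whenever a {\sf KeepDir} group terminates by crossing, the configuration is such that $A'$ was also in {\sf KeepDir} (because the \vEsteps/$r_{ms}$ bounds that put a set into {\sf KeepDir} are essentially symmetric under the adversary's power), so both groups terminate together and, having just been on the same edge in opposite directions, are on two neighbouring nodes — gathering. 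In the second sub-case (round $12n$ with the bound predicate), I would show $[r_{ms}<9n \land \vEsteps<n]$ forces, via a Lemma-\ref{observationmovements}-style argument applied over rounds $6n$–$12n$, that $A$ and $A'$ have merged into a single group by round $12n$, so all $k$ agents are co-located and all terminate.

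\textbf{Case: some agent terminates in {\sf SwitchDir}, and no agent terminates in {\sf KeepDir}.} Termination in {\sf SwitchDir} requires, at round $12n$: $r_{ms} < 9n$, $\vEsteps < n$, $\vOthers = \vTotOthers$, and $\neg\pOppositeDirMeeting$. The condition $\vOthers = \vTotOthers = k$ is the crucial one: the terminating agent literally sees all $k$ agents at its node. I must upgrade "sees $k$ agents" to "gathering achieved and everyone knows it": since all agents in the agent's own set $A$ are grouped with it (dichotomy), and $\vOthers = k$ means the other set $A'$ is also at this node, \emph{all} $k$ agents are at one node; then I must check each of them also satisfies its own termination guard at round $12n$. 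Agents of $A'$: if $A'$ is also in {\sf SwitchDir}, they see the same $k$ agents and the same bounds (by the coherence lemmas), so they terminate too; if $A'$ is in {\sf KeepDir}, I must show its round-$12n$ guard $[r_{ms}<9n \land \vEsteps<n]$ also holds — this is where I'd do the bookkeeping linking the {\sf SwitchDir} agent's bounds to the {\sf KeepDir} agent's bounds using the fact that they travelled together from round $6n$ to $12n$ once merged.

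\textbf{Case: no agent terminates anywhere.} Then I must show every agent has completed a full tour of the ring. If an agent is in {\sf SwitchDir}, Lemma \ref{lemmaAB} already gives that it toured the ring in the first $6n$ rounds. If an agent is in {\sf KeepDir} and did not terminate by round $12n$, then at round $12n$ the predicate $[r_{ms} < 9n \land \vEsteps < n]$ failed, i.e. $r_{ms} \geq 9n$ or $\vEsteps \geq n$; I would show either disjunct implies a full tour was done between rounds $6n$ and $12n$ (the $\vEsteps \geq n$ case is immediate; for $r_{ms} \geq 9n$ I'd argue that a late same-direction meeting forces, again via the blocking/merging argument of Lemma \ref{observationmovements} applied on a $3n$-round window, that the agent must have moved $\geq n$ steps in that window, hence toured). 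The single tricky point throughout is the symmetric handling of the two orientation sets when one is in {\sf KeepDir} and the other in {\sf SwitchDir}; I expect that to be the main obstacle, and I would isolate it as a small auxiliary claim — that the adversary cannot force one set into {\sf KeepDir} while the other finishes a tour unless the two groups in fact meet before round $12n$ — and prove it by a counting argument on blocked rounds, exactly in the style of Lemma \ref{observationmovements}.
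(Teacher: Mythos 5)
Your overall strategy matches the paper's: the same supporting lemmas, the same dichotomy (each orientation class is entirely in {\sf KeepDir} or entirely in {\sf SwitchDir} at round $6n$, by Lemmas \ref{lemmaC} and \ref{lemmaAB}), and an equivalent case split (you organize by terminating state, the paper by round-$6n$ behaviour). However, two specific steps are wrong as stated. The point you isolate as ``the main obstacle'' --- that a {\sf KeepDir} group can only cross or meet a group that is also in {\sf KeepDir} --- is true, but not for the reason you offer (symmetry of the $\vEsteps$/$r_{ms}$ bounds; the mixed situation, one class in {\sf KeepDir} and the other in {\sf SwitchDir}, is perfectly realizable and is the paper's Case~3). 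The actual reason is directional: the two orientation classes have opposite notions of \emph{left}, so before round $6n$ they move in opposite global directions; after round $6n$ a {\sf KeepDir} class keeps its left while a {\sf SwitchDir} class takes its right, which is the other class's left --- hence in the mixed case \emph{all} agents move in the same global direction from round $6n$ on, and \pCross / \pOppositeDirMeeting can never fire. This single observation is what disposes of your mixed case, and it is exactly what the paper uses.

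Second, in the sub-case where a {\sf KeepDir} agent terminates at round $12n$ via $r_{ms}<9n\land\vEsteps<n$, your claim that the two opposite-moving groups ``have merged into a single group \dots\ co-located'' cannot be right: had they met at a node, \pOppositeDirMeeting would have fired and both groups would have terminated at that earlier round. Moreover, Lemma \ref{observationmovements} concerns same-direction agents catching up to a blocked agent and does not apply to two groups approaching each other. The correct argument (the paper's) is that if no crossing or meeting occurred in $[6n,12n]$, the two groups must both be blocked at the two endpoints of the missing edge; then $r_{ms}<9n$ and $\vEsteps<n$ hold for \emph{both} groups, both terminate at round $12n$, and gathering is achieved on two neighbouring nodes --- which is all the problem requires, since strict gathering is impossible (Property \ref{obs-noNodeGathering}). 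The remainder of your plan (the {\sf SwitchDir} bookkeeping via $\vOthers=\vTotOthers$, and the no-termination case via $\vEsteps\geq n$ or the contrapositive of Lemma \ref{observationmovements} when $r_{ms}\geq 9n$) is sound and matches the paper.
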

\begin{proof}
Notice that, by construction, the agents do not change their direction before  round $6n$.

Let us first consider the case when at round $r=0$ the agents do not have the same orientation.
We distinguish three possible cases, depending on what happens ar round $6n$.
 
\begin{enumerate}
\item {\em At round $6n$, all agents change direction.} 
By Lemma~\ref{lemmaC}, it follows that at round $6n$ all of them completed a loop of the ring. 
According to {\sf SwitchDir}, an agent, to enter the {\sf Term} state, has to verify both (a) $\vOthers = \vTotOthers$ and (b) $\neg \pOppositeDirMeeting$: to verify (a), the agents have to meet at the same node, thus $\pOppositeDirMeeting$ has to be true, hence (b) can not verified. 
It follows that the agents cannot terminate at round $12n$, and the lemma follows. 

\item {\em At round $6n$, no agent changes direction.} Thus, according to the algorithm, \predC is verified for all agents, that will enter {\sf KeepDir} state; also, by Lemma \ref{lemmaC}, all agents that share the same direction are in the same group (i.e., there are two groups of agents moving in opposite direction).

By definition of {\sf KeepDir}, if between round $6n$ and $12n$ an agent crosses or meets another agent, they both terminate; hence, all the agents in their respective group terminate, and the lemma follows.
If no crossing occurs between round $6n$ and $12n$, then both group of agents are necessarily blocked at the ends of the missing link (otherwise the two groups would have crossed or met). Thus, at round $12n$, $r_{ms} < 9n$ (last reset of $r_{ms}$ occurred at round $6n$) and $\vEsteps < n$ (otherwise, again, the two groups would have crossed or met), for any agent; hence all agents terminate at round $12n$, and the lemma follows.

\item {\em At round $6n$, only some agents  change direction.}  By Lemmas \ref{lemmaC} and \ref{lemmaAB}, it follows that, after round $6n$. all agents will move in the same direction. 

Let us assume that, at round $12n$, condition $r_{ms} < 9n \land \vEsteps < n$ holds for some agent $a^*$. 
If $a^*$ did not switch direction at round $6n$, $a^*$ terminates at round $12n$, say at node $v$ ({\sf KeepDir}); 
hence, by Lemma~\ref{observationmovements}, all agents gather at $v$.
Otherwise, if $a^*$ switched direction at round $6n$, since all agents are moving in the same direction, condition $\pOppositeDirMeeting$ is false from round $6n$ on; moreover, by Lemma~\ref{lemmaAB}, $a^*$ computed the number $k$ of total agents at round $6n$. 
Therefore, $a^*$ terminates at round $12n$, say at node $v$ ({\sf SwitchDir}). Finally, by Lemma~\ref{observationmovements}, 
all agents gather at $v$, and the lemma follows.

On the other hand, if condition $r_{ms} < 9n \land \vEsteps < n$ does not hold for any agent at round $12n$, no agent can enter the {\sf Term} state. Also, following an argument similar to the one used in Lemma~\ref{lemmaAB}, we have that all agents have done a complete loop of the ring after $6n$ rounds, and the lemma follows.
\end{enumerate}

The  other case left to consider is  when at round $r=0$ the agents have the same orientation. We distinguish two cases.
\begin{enumerate}

\item {\em There is an agent that does not change direction at round $6n$.}  Then, at this time, all agents are in the same group 
and none of them switches direction (Lemma \ref{lemmaC}). Thus, if the agents terminate at round $12n$, gathering is solved, 
and the lemma follows. Otherwise, by {\sf KeepDir}, predicate $r_{ms} < 9n \land \vEsteps < n$ is not verified at round $12n$ 
for any of them (they are all in the same group) and they have all done a complete loop of the ring 
(last reset of $r_{ms}$ occurred at round $6n$, hence $\vEsteps \geq n$ for all agents), 
so they  start {\sf Phase 2}, and the lemma follows.

\item {\em There is an agent that switches direction at round $6n$.} Then, at this time, all agents are in the same group, all of them switch direction, and have done a complete loop of the ring (Lemma \ref{lemmaAB}). The proof follows with an argument similar to the one of previous case.
\end{enumerate}

\end{proof}

 \subsubsection{Algorithm {\sc Gather(Cross,$\not$Chir)}: Phase 2}\label{ph2:knownnochir}
If the agents execute {\sf Phase 2} then,  by Lemma~\ref{phase1}, they  know both the position of all the homebases and the number of agents $k$;
that is, they know the initial configuration $C$. 
 If $C\in {\cal P}$,  gathering is impossible (Property \ref{th:noPeriodic}) and they become aware of this fact. Otherwise,
if $C\in {\cal E}$ they can elect an edge $e_L$,  and if   $C\in {\cal C}\setminus ({\cal P}\cup {\cal E})$ they can elect one of the node as leader $v_L$ 
(Property \ref{election}).
For simplicity of exposition and without loss of generality, in the following we assume that {\sf Phase 2} of the algorithm, shown in Figure \ref{algp2:gatheringnochircross}, starts at round $0$.

\begin{figure}[H]
\footnotesize
\begin{framed}
\begin{algorithmic}
\State States: \{{\sf Phase 2}, \sReached,  \sReaching,  {\sf Joining}, {\sf Waiting},  {\sf ReverseDir},{\sf Term}\}.
\AtState{Phase 2}
 \If {$C \in {\cal P}$}
 \State {\sf unsolvable}()
    \State {\sf Go to State} {\sf Term}
 \EndIf
  \State resetAllVariables except \vTotOthers
   \State $dir=${\sf shortestPathDirectionElected}()
    \State \Explore({$dir\,\,|$ $\emph{ seeElected} $: \sReached; $\vTtime=3n$: \sReaching  })
    \AtState{\sReached}
    \State $dir=${\sf opposite}$($dir$)$
    \If {\vTtime$\geq 3n$}
    \State \Explore({$dir \,\,|$ $ \vOthers =\vTotOthers \lor \vBtime = 2n$:  {\sf Term};  $ \pCross $:  {\sf Joining}})
    \EndIf
\AtState{Joining}
 \State $dir=${\sf opposite}$($dir$)$
    \State \Explore({$dir \,\,|$ $\vOthers =\vTotOthers \lor \vBtime = 2n \lor \pCross$: {\sf Term};    $\vEsteps =1$: {\sf ReverseDir}})
\AtState{\sReaching}
    \State \Explore({$dir \,\,|$ $\vOthers =\vTotOthers \lor \vBtime = 2n$: {\sf Term}; $\pSameDirMeeting$: \sReached;  $\pOppositeDirMeeting\ \lor$ \emph{seeElected}: {\sf ReverseDir}; $\pCross$: {\sf Waiting}})
   \AtState{Waiting}
    \State \Explore({nil $|$ $ \vEtime > 2n$:  {\sf Term};  $\pMeeting$: {\sf ReverseDir}})
       \AtState{ReverseDir}
       \State $dir=${\sf opposite}$($dir$)$
       \State {\sf Go to State} \sReached
  \END
\end{algorithmic}
\begin{center}Protocol  {\sc Gather(Cross,$\not$Chir)}, Phase 2.  \end{center}
\end{framed}
\caption{{\sf Phase 2} of  Algorithm 
{\sc Gather(Cross,$\not$Chir)}  
\label{algp2:gatheringnochircross}}
\end{figure}

In {\sf Phase 2}, an agent first resets all its local variables, with the exception of \vTotOthers, that stores the number of agents $k$; between rounds $0$ and $3n$, each agent moves toward the elected edge/node following the shortest path ({\sf shortestPathDirectionElected}()). 
If at round $3n$ an agent has reached the elected node or an endpoint of the elected edge it stops, and enters the \sReached state.
Otherwise (i.e., at round $3n$, the agent is  not in state \sReached), it switches to the \sReaching state. 
If all agents are in the same state (either $\sReached$ or $\sReaching$), then they are in the same group, and terminate ($\vOthers=\vTotOthers)$.
If they do not terminate, all agents start moving: each  $\sReaching$ agent in the same direction it chose at the beginning of {\sf Phase 2},  while the
 \sReached agents reverse direction.
 
  From this moment,
each  agent,   regardless of its state,  terminates immediately  if all  $k$ agents are in the same node,  
or if  it is blocked on a missing edge for $2n$ rounds.  In other situations, the behaviour of each agent $a^*$ depends on its state,  as follows.

\paragraph{State \sReached.} 
If $a^*$ crosses a group of agents, it enters the {\sf Joining} state. In this new state, say at node $v$, the agent 
switches direction in the attempt to catch and join the agent(s)  it just crossed. 
 If $a^*$  leaves $v$  without crossing any agent  ($\vEsteps =1$), 
 $a^*$ enters again the \sReached state, switching again direction (i.e., it goes back to direction originally chosen when {\sf Phase 2} started).
If instead $a^*$  leaves $v$ and it crosses  some agents, it terminates: this can happen 
because also the agents that $a^*$  crossed   try to catch  it  (and all other agents  in the same group with $a^*$). 
As we will show, in this case all agents can correctly terminate.

\paragraph{State \sReaching.} 
If $a^*$  is able to reach the elected node/edge (\emph{seeElected} is verified), it enters the \sReached state, and switches direction. 

If $a^*$ is blocked on a missing edge and it is reached by other agents, then it switches state to \sReached keeping its direction (\pSameDirMeeting is verified).

Finally, if $a^*$ crosses someone, it enters the {\sf Waiting} state, and it stops moving. 
 If while in the {\sf Waiting} state $a^*$ meets someone new before $2n$ rounds, it enters the \sReached state, and switches direction. 
Otherwise,  at round  $2n$ round it terminates. 
 
 \begin{lemma}\label{observation2}
 At  round $3n$ of {\sf Phase 2}, there is at most one group of agents in state \sReaching, and at most two groups of agents in state \sReached.
 \end{lemma}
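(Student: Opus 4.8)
The plan is to follow the agents through rounds $0,\dots,3n$ of Phase~2. By Lemma~\ref{phase1} every agent enters Phase~2 knowing the configuration $C$, hence agrees on the elected node $v_L$ (resp. elected edge $e_L$) and on the direction returned by {\sf shortestPathDirectionElected}(). So the agents partition into (at most) two \emph{streams} that converge on the elected object from the two sides; within a stream every agent attempts the same geometric move in every round of $[0,3n]$, except that an agent stops and passes to \sReached as soon as it reaches the elected object. With this picture the statement splits into three claims, which I would establish in order.

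First I would deal with the \sReached groups, which is routine. At round $3n$ every \sReached agent is parked on $v_L$ (resp. on an endpoint of $e_L$) and, because of the guard $\vTtime\ge 3n$ in state \sReached, has not moved since reaching it. Agents that reached the target from the same side have all reversed direction upon arrival and hence share a common state, node, and direction, so they form a single group; this gives at most two \sReached groups (one per side when a node is elected, one per endpoint when an edge is elected).

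Second I would show that each stream can leave behind at most one \sReaching group. Suppose an agent $a^*$ of, say, the clockwise stream is in \sReaching at round $3n$: then it has not reached the target, so it has moved fewer than $n$ steps in the first $3n$ rounds, hence (in the Phase~2 state it always attempts to move) it has been blocked for more than $2n$ of those rounds. Reusing the argument of Lemma~\ref{observationmovements}, every agent of this stream lying behind $a^*$ merges with it by round $3n$: during each blocked round, $1$-interval connectivity forces every such agent not yet co-located with $a^*$ to advance toward it, the initial gaps are smaller than $n$, and $a^*$ itself advances fewer than $n$ times. Conversely, no stream-agent strictly ahead of $a^*$ can still be in \sReaching at round $3n$: whenever $a^*$ is blocked, the unique missing edge is the one immediately in front of $a^*$, which is not the edge in front of a more advanced agent, so that agent advances during all of $a^*$'s $>2n$ blocked rounds, thus moves more than $n$ steps, and has therefore reached the target. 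Hence all \sReaching agents of the stream form the single merged group around $a^*$.

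Third, and here I expect the real work, I would show the two streams cannot both be stuck at round $3n$, upgrading "at most two \sReaching groups" to "at most one". Assume both streams leave a \sReaching group, say $G$ around $x$ (clockwise stream) and $G'$ around $y$ (counter-clockwise stream). By Step~2 a representative of $G$ is blocked for more than $2n$ rounds, and so is a representative of $G'$; the blocking edges are the edge immediately clockwise of $x$ and the edge immediately counter-clockwise of $y$. If these two edges are distinct, then $G$ and $G'$ are simultaneously blocked for more than $n$ rounds, which would require two missing edges in each such round, impossible under $1$-interval connectivity. If the two edges coincide, then $y$ is the clockwise neighbour of $x$, the missing edge lies between $x$ and $y$, and $G$, $G'$ each move toward it; but the clockwise shortest path out of $x$ toward the elected object then passes through $y$, forcing $d_{cw}(y,\textnormal{target}) < d_{cw}(x,\textnormal{target}) \le d_{ccw}(x,\textnormal{target}) = d_{ccw}(y,\textnormal{target}) - 1$, which contradicts $G'$ having chosen the counter-clockwise direction. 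The delicate point, and the main obstacle, is precisely this last case distinction and keeping the shortest-path bookkeeping clean in both the elected-node and the elected-edge variants (in the latter, "target" is replaced by the nearer endpoint of $e_L$ for each stream); the counting argument and Steps~1--2 are comparatively straightforward. Combining the three claims yields the lemma.
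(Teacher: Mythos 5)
Your proof is correct and takes essentially the same route as the paper's: at most two \sReached groups by direction of arrival; an agent still in \sReaching at round $3n$ must have been blocked for at least $2n+1$ of the $3n$ rounds, so Lemma~\ref{observationmovements} merges its whole stream into one group; and two such long-blocked agents on the two edge-disjoint approach paths cannot coexist under 1-interval connectivity. The only difference is one of detail: the paper simply asserts that the two shortest paths toward the elected node/edge are disjoint, whereas you additionally verify the would-be coincident-edge case via the shortest-path inequality --- a worthwhile elaboration, not a different argument.
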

 \begin{proof}
 In {\sf Phase 2}, all agents start moving towards the nearest elected endopoint/node. The lemma clearly follows for agents in state \sReached: in fact, the two
 groups (one of them possibly empty) are formed by all the agents that have successfully reached the elected endpoint/node from each of the two directions.

If an agent is not able to reach the elected endpoint/node within $3n$ rounds, it must have been blocked for at least $2n+1$ rounds; 
notice that this cannot happen to two agents walking on disjoint paths toward the elected endpoint/node. 
Therefore, by Lemma~\ref{observationmovements}, there can be at most one group of agents in state \sReaching, and the lemma follows. 
 \end{proof}
 
Note that, if at round $3n$ there are two groups of agents in state \sReaching, they have opposite  moving direction \emph{dir}; 
also, they are either at the same leader node, or at the two endpoints of the leader edge. 

 \begin{lemma}\label{lemmaphase2}
If an agent $a^*$ terminates executing {\sf Phase 2}, then all other agents will terminate, and gathering is correctly achieved. 
 \end{lemma}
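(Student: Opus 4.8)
The plan is to argue by a careful case analysis on which state an agent $a^*$ is in when it first enters {\sf Term}, and on which of the two termination triggers fires (namely $\vOthers=\vTotOthers$, or $\vBtime=2n$, or $\pCross$ in the {\sf Joining} case, or $\vEtime>2n$ in the {\sf Waiting} case). In each case I would show two things: (i) the gathering condition is met, i.e.\ all $k$ agents are at the same node or on two neighbouring nodes; and (ii) every other agent also reaches {\sf Term} within finitely many additional rounds. The backbone of the argument is Lemma~\ref{observation2}, which tells us that from round $3n$ of {\sf Phase 2} on, the agents are organized into at most three groups: at most one \sReaching group, and at most two \sReached groups (on the two sides of the elected node/edge). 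So the whole dynamics after round $3n$ is really a bounded interaction between a constant number of groups, and I only need to track how these groups merge, cross, and reverse.

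The key steps, in order, would be: First, handle the trivial trigger $\vOthers=\vTotOthers$: if $a^*$ sees all $k$ agents at its node, gathering is (strictly) achieved and, since all agents are then in the same node and in the same state, they all apply the same rule and terminate simultaneously — so (i) and (ii) are immediate. Second, handle $a^*$ terminating in state \sReached (or the symmetric \sReaching/{\sf Waiting} cases) because it is \emph{blocked for $2n$ rounds} ($\vBtime=2n$): here I would use 1-interval connectivity to argue that while $a^*$'s outgoing edge is missing for $2n$ consecutive rounds, \emph{every other group is free to move}, and in $2n$ rounds any group starting within distance $\le n-1$ on the correct side must reach $a^*$'s node (this is the same counting argument as in Lemma~\ref{observationmovements}); so all other agents pile up at $a^*$'s node or the neighbouring one, and either meet $a^*$ — triggering their own $\vOthers=\vTotOthers$ — or end up on the adjacent node, giving non-strict gathering. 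Third, handle the \sReached/{\sf Joining} chase: $a^*$ in \sReached crosses a group, switches to {\sf Joining}, reverses, and either fails to catch them ($\vEsteps=1$ with no cross, back to \sReached) or crosses them again and terminates; I would argue that the group $a^*$ crossed is simultaneously chasing $a^*$'s group (symmetric code), so a mutual second crossing means the two groups are now on two edges meeting at the elected endpoint region — within one round they collapse onto at most two adjacent nodes, and the $\pCross$/$\vOthers$ triggers fire for everyone. Fourth, the \sReaching/{\sf Waiting} interaction with the elected endpoint: when the lone \sReaching group reaches the elected node it becomes \sReached, and when it crosses someone it waits; I would show the waiting timer $\vEtime>2n$ is long enough that, by the blocked-for-$2n$ argument again, whoever it crossed (necessarily heading toward the same small region) catches up before the timer expires, so it leaves {\sf Waiting} via $\pMeeting$ to \sReached rather than timing out in isolation — or, if it does time out, it is because the others are already gathered at the neighbouring node. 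Finally, I would knit these together: since there are only finitely many groups, only finitely many crossings/reversals can occur before some group triggers a ``hard'' termination ($\vOthers$ or $\vBtime$ or $\vEtime$), and once one agent is in {\sf Term} the above local arguments force all the rest into {\sf Term} at a node or pair of adjacent nodes.

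The main obstacle I expect is the bookkeeping in the \sReached/{\sf Joining} mutual-chase case together with the adversary's freedom to keep flipping which edge is missing: I need to rule out an infinite ``they cross, both reverse, cross again, both reverse\ldots'' oscillation in which no agent ever sees all $k$ partners at once and no agent is ever blocked for a full $2n$ rounds. The resolution should be a potential/progress argument — either the adversary eventually blocks some group for $2n$ consecutive rounds (forcing that group's $\vBtime=2n$ termination and then the flooding argument), or it keeps some edge present long enough that a crossing turns into a genuine node-meeting and $\vOthers=\vTotOthers$ fires; the $3n$ and $2n$ constants in the thresholds are exactly what make ``$2n$ consecutive blocked rounds'' unavoidable if no meeting ever happens, because a group that is not blocked must traverse an edge, and there are fewer than $n$ edges on each relevant sub-path. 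I would make this precise by bounding the total number of edge traversals any group can perform before either meeting everyone or getting stuck, mirroring the $\vEsteps<n$ reasoning used throughout {\sf Phase 1} and in Lemma~\ref{observation2}.
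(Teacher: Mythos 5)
Your overall strategy matches the paper's: a case analysis on the termination trigger ($\vOthers=\vTotOthers$, $\vBtime=2n$, $\pCross$ in {\sf Joining}, $\vEtime>2n$ in {\sf Waiting}), anchored on the group structure from Lemma~\ref{observation2}, with the blocked-for-$2n$ ``flooding'' argument and the mutual-chase argument doing the real work. However, there is one concrete missing step in your {\sf Joining}/$\pCross$ case. You argue that a mutual second crossing leaves the two chasing groups on two adjacent nodes, but the lemma requires that \emph{all $k$ agents} end up there. You never account for the possible third group in state \sReaching. The paper closes this hole by observing that the two \sReached groups leave the elected node/edge in opposite directions, so before they can cross \emph{each other} one of them must already have swept up the \sReaching group (merging it via \pSameDirMeeting or $\pOppositeDirMeeting$/\emph{seeElected}); hence, at the moment two groups are both in {\sf Joining}, no \sReaching group survives and the two groups together contain all $k$ agents. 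Without this observation, your ``they collapse onto at most two adjacent nodes'' conclusion only covers a subset of the agents and does not establish that gathering is correctly achieved.

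Two smaller remarks. First, your closing paragraph about ruling out an infinite cross--reverse oscillation is addressing liveness, i.e.\ that \emph{some} agent terminates; that is the content of Lemma~\ref{phase2rounds}, not of this lemma, which is purely conditional (``\emph{if} $a^*$ terminates, then\dots''), so that effort is misplaced here. Second, in the $\vOthers=\vTotOthers$ case the agents at the common node need not be ``in the same state''; what matters is only that each of them, whatever its state, evaluates $\vOthers=\vTotOthers$ and terminates.
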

 \begin{proof}
If $a^*$ terminates because $\vOthers =k$, the lemma clearly follows. Let us consider the other termination conditions.
\begin{enumerate}
\item {\em $a^*$ is either in state \sReached or \sReaching, and $\vBtime = 2n$.}
 Agent $a^*$ is blocked on one endpoint of the missing edge; thus, after $2n$ steps, all agents with opposite direction are on the other endpoint of the missing edge. Note that this holds also if the other agents are all in the \sReaching state and reach the elected endpoint/node in (at most) $n$ rounds: in this case, in fact, they would switch direction, and go back to the other endpoint of the missing edge in at most other $n$ steps. 

Therefore, the other agents will either terminate because they wait for $2n$ rounds at the other endpoint of the missing edge, or because they reach the same endpoint node where $a^*$ terminated ($\vOthers=\vTotOthers$ is thus verified); hence they correctly gather, and the lemma follows. 

\item {\em $a^*$ is in state {\sf Joining} and $\pCross$ is verified.}
 First notice that, if $a^*$ crosses some agent(s), then the crossed agent(s) are in state {\sf Joining} as well (the agents in the {\sf Waiting} state do not try to actively cross an edge); thus, they were in the \sReached state before crossing. However, this is possible only if there is no group of agents in state \sReaching: at round $3n$, the two groups \sReached starts moving in opposite directions from the same node or from two endpoints of the same edge. Therefore, when they cross, one of them has already met the group \sReaching, if it exists, and when that happens the group \sReaching merges with the group \sReached. This implies that, when two groups \sReached cross, all agents are in {\sf Joining}. Therefore, when they cross again, all agents are on the two endpoints of the same edge, and the lemma follows.

\item {\em $a^*$ is in state {\sf Waiting}, and $\vEtime > 2n$.} By Lemma~\ref{observation2}, $a^*$ has crossed a group of agent in state \sReached. These agents, by entering the {\sf Joining} state, actively try to reach the node where $a^*$ is (in {\sf Waiting}).  If the {\sf Joining} group does not reach $a^*$ in $2n$ rounds, then the edge connecting them is necessarily missing. Also note that, if there is another \sReached group, it has to reach the agents in the {\sf Joining} state within $2n$ rounds. Now, these two groups will either terminate by waiting $2n$ rounds, or because they are able to reach the {\sf Waiting} agent $a^*$, finally detecting that $\vOthers=\vTotOthers$. In all cases, the agents correctly terminate solving the gathering, and the lemma follows. 
\end{enumerate}
 \end{proof}

  \begin{lemma}\label{phase2rounds}
 {\sf Phase 2} terminates in at most $10n$ rounds. 
 \end{lemma}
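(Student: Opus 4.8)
The plan is to bound, in the worst case, the number of rounds each state of Phase~2 can consume, and then sum these contributions to reach $10n$. The natural decomposition follows the state structure of Figure~\ref{algp2:gatheringnochircross}: the initial walk toward the elected node/edge in state {\sf Phase 2}/{\sf Reaching} lasts exactly $3n$ rounds by construction; after that, every remaining state transition is governed either by a hard cap of the form $\vBtime=2n$, $\vEtime>2n$, or $\vEsteps=1$, or by an event ($\pCross$, $\pMeeting$, $\pSameDirMeeting$) that is itself forced to occur within $O(n)$ rounds because the agents are confined to a ring of $n$ nodes and the schedule is $1$-interval connected. So the first step is to make the ``$3n$'' of the opening phase explicit and then argue that what follows is a bounded chain of states.

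The key steps, in order: First, observe that after round $3n$ at most two {\sf Reached} groups and one {\sf Reaching} group remain (Lemma~\ref{observation2}), so the analysis splits into finitely many cases. Second, track the {\sf Reaching} group: once it starts moving it can be blocked for at most $2n$ rounds before $\vBtime=2n$ forces {\sf Term}, or it meets/crosses a {\sf Reached} group; in the latter case a $\pSameDirMeeting$ sends it to {\sf Reached} or a $\pCross$ sends it to {\sf Waiting}, and in {\sf Waiting} the $\vEtime>2n$ cap guarantees termination within another $2n$ rounds — this is where Lemma~\ref{lemmaphase2}'s case~3 geometry is reused. Third, track the two {\sf Reached} groups: they reverse direction at round $3n$ and either one gets blocked for $2n$ rounds (terminate), or they $\pCross$ each other, entering {\sf Joining}; in {\sf Joining} the agent reverses and either re-crosses within one step region and terminates, or $\vEsteps=1$ without a cross sends it back to {\sf Reached} with the original direction. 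The crucial quantitative claim is that this {\sf Reached}$\leftrightarrow${\sf Joining}$\leftrightarrow${\sf ReverseDir} oscillation cannot repeat indefinitely: because the two groups are confined to opposite ends of the ring and the missing edge can be only one at a time, after at most a constant number of direction reversals — each costing at most $n$ rounds of traversal plus the standing $\vBtime=2n$ cap — they must land on the two endpoints of a single edge and detect $\vOthers=\vTotOthers$ or $\vBtime=2n$. Summing: $3n$ for the initial walk, then at most $2n$ (a blocked wait) or one more traversal-plus-reversal cycle bounded by roughly $n + 2n$, and a final $\vBtime=2n$ or $\vEtime>2n$ detection, which the authors have arranged to total at most $10n$.

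The main obstacle I expect is the third step: pinning down precisely how many times an agent can cycle through {\sf Reached}, {\sf Joining}, {\sf ReverseDir} before terminating, and showing each cycle is charged at most $\approx n$ (or that the number of cycles is a small constant), so that the grand total does not exceed $10n$ rather than some larger multiple of $n$. This requires arguing that once the two {\sf Reached} groups first cross, all subsequent behavior is {\sf Joining} (as established in the proof of Lemma~\ref{lemmaphase2}, case~2), so there is effectively one decisive {\sf Joining} excursion rather than an unbounded sequence; combined with the fact that a {\sf Joining} agent that takes one step without crossing immediately ($\vEsteps=1$) reverts, so it can wander at most $O(n)$ before the missing-edge obstruction forces a $\vBtime=2n$ or $\vEtime>2n$ termination. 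Once that counting is nailed down, the remaining arithmetic — adding the $3n$ opening, the at-most-$2n$ waits incurred in {\sf Reaching}/{\sf Waiting}/{\sf Joining}, and the at-most-$n$ traversals between reversals — is routine and yields the stated $10n$ bound. I would present the case where the {\sf Reaching} group is empty first (cleaner, giving a smaller bound), then the general case, and verify that even the worst chain of transitions stays under $10n$.
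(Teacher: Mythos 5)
Your proposal follows essentially the same route as the paper's proof: the $3n$ initial walk, Lemma~\ref{observation2} to reduce to at most two \sReached groups and one \sReaching group, a case analysis on how these groups interact, the $\vBtime=2n$ / $\vEtime>2n$ caps, and the observation (from the proof of Lemma~\ref{lemmaphase2}) that once two \sReached groups cross, all agents are in {\sf Joining} and the next crossing is decisive, which is exactly how the paper rules out an unbounded {\sf Joining} oscillation. The remaining work you defer (the explicit per-case arithmetic) is handled in the paper by the same coarse bookkeeping you describe, so the plan is sound and matches the paper's argument.
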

 \begin{proof}
By Lemma~\ref{observation2}, at the end of round $3n$, the following holds:
 \begin{enumerate}
 \item If there is only one group with state \sReaching, the agents terminate on condition $\vOthers =\vTotOthers$.
 \item If there is only one group with state \sReached, the agents terminate on condition $\vOthers =\vTotOthers$.
 \item If there are two groups with state \sReached, they have opposite direction of movements (otherwise, they would be in the same group). Therefore, within $n$ rounds, they have to be at distance $1$ from each other: they terminate within the next $2n+1$ rounds either by crossing in state {\sf Joining}, or on condition $\vBtime = 2n$.
 \item If there are two groups of agents in the \sReached state, say $G$ and $G'$, and one group of agents in the \sReaching state, say $G^*$, then $G$ and $G'$ have opposite direction of movements (otherwise, they would be in the same group); hence one of them, say $G$, has direction of movement opposite to the one of $G^*$. Therefore, within $n$ rounds, $G$ and $G^*$ have to be at distance $1$ from each other. If they do not cross each other within the next $2n$ rounds, they will terminate on condition $\vBtime = 2n$, and the lemma follows. 
 
Otherwise (they cross within the next $2n$ rounds), two cases can occur: (A) they both terminate, one group on condition $\vBtime = 2n$ and the other one on condition $ \vEtime > 2n$ in the {\sf Waiting} state (between the two groups there is the missing edge); or (B) they will join within the next $2n$ rounds. In Case (A) the lemma follows. In Case (B), they either terminate on condition 
 $\vOthers =\vTotOthers$, and the lemma follows; or the \sReaching group enters the \sReached state (via {\sf Waiting}), and starts moving towards the other \sReached group. In this last case, the proof follows from previous Case 3.
 
 \item If there is one group in the \sReached state and one in the \sReaching state, we have two possible cases. (A) The two groups are moving towards each other: in this case the proof follows similarly to the previous Case 3. (B) The two groups move in the same direction. If the group \sReaching does not reach the elected endpoint/node within $2n+1$ rounds, the two groups necessarily meet, and thus terminate; hence the lemma follows. Otherwise, after \sReaching reaches the elected endpoint/node, this group enters the \sReached state, and the proof follows similarly to the previous Case 3.
 \end{enumerate}

 \end{proof}

Hence we have
 \begin{theorem} 
 \label{NoChirCross}
Without chirality, \Gathering is solvable in rings of known size with cross detection, starting from any $C\in {\cal C}\setminus{\cal P}$. 
Moreover, there exists an algorithm solving \Gathering that terminates in ${\mathcal O(n)}$ rounds for any $C\in {\cal C}\setminus{\cal P}$
and,  If $C\in {\cal P}$, the algorithm  detects that the configuration is periodic. 
 \end{theorem}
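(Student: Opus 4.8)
The plan is to assemble Theorem~\ref{NoChirCross} directly from the machinery already developed for Phase~1 and Phase~2, since all the genuinely hard work has been isolated into the preceding lemmas. The statement has three parts: (i) correctness of gathering whenever $C\in{\cal C}\setminus{\cal P}$; (ii) an $O(n)$ running-time bound; and (iii) effective detection of periodicity when $C\in{\cal P}$. I would prove all three at once by a case analysis on what happens at the end of Phase~1.

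First I would invoke Lemma~\ref{phase1} to split into the two exhaustive cases. If some agent terminates during Phase~1, then Lemma~\ref{phase1} already gives that all agents terminate and gathering has been correctly achieved; since Phase~1 lasts at most $12n$ rounds, this contributes an $O(n)$ bound and we are done in this branch --- including the subcase $C\in{\cal P}$, where gathering may or may not actually be detected as impossible, so I would note that in the ``some agent terminates'' branch the configuration cannot be periodic (otherwise, by Property~\ref{th:noPeriodic}, no correct algorithm could have gathered, contradicting Lemma~\ref{phase1}'s assertion of correct gathering). The complementary case is that no agent terminates in Phase~1; then Lemma~\ref{phase1} guarantees every agent has completed a full tour of the ring, hence (as explained in the text preceding Figure~\ref{algp2:gatheringnochircross}) every agent knows the positions of all $k$ homebases and the value of $k$, i.e.\ knows the initial configuration $C$ exactly.

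Next, still in the ``no agent terminated in Phase~1'' branch, I would branch on whether $C\in{\cal P}$. If $C\in{\cal P}$, then in state {\sf Phase 2} every agent evaluates $C\in{\cal P}$ to true, calls {\sf unsolvable}(), and halts in {\sf Term}: this is exactly the required detection of impossibility, and it happens within $O(n)$ rounds, establishing part (iii) and its matching time bound. If $C\in{\cal C}\setminus{\cal P}$, then by Property~\ref{election} the agents can deterministically and consistently elect either a leader node (when $C\notin{\cal E}$) or a leader edge (when $C\in{\cal E}$), using the canonical election described after Property~\ref{election}; all agents agree on the same elected object because they all know $C$. Phase~2 then runs from this common starting point, and Lemma~\ref{lemmaphase2} gives that if any agent terminates in Phase~2 then all do and gathering is correctly achieved, while Lemma~\ref{phase2rounds} bounds Phase~2 by $10n$ rounds. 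The only remaining gap is to argue that \emph{some} agent does terminate in Phase~2 --- but this is implicit in the exhaustive case analysis of Lemma~\ref{phase2rounds}, each of whose five cases ends with agents reaching a terminating condition ($\vOthers=\vTotOthers$, $\vBtime=2n$, or $\vEtime>2n$) within the stated $10n$ rounds; combined with Lemma~\ref{lemmaphase2} this yields correct gathering. Adding the $\leq 12n$ rounds of Phase~1 to the $\leq 10n$ of Phase~2 gives an overall bound of $22n = O(n)$ rounds in every case, completing part (ii).

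The main obstacle, as usual with these two-phase protocols, is not any single deduction but making sure the case split is genuinely exhaustive and that each terminal state reached in the lemmas actually corresponds to correct gathering (two neighbouring nodes or one node) rather than a spurious halt. In particular I would be careful about the interplay between Lemma~\ref{observation2} (at most one \sReaching group and at most two \sReached groups) and the termination conditions invoked in Lemma~\ref{phase2rounds}: the subtle point is that a group in state {\sf Joining} terminating on $\pCross$ must be shown to leave \emph{all} agents on the two endpoints of a single edge, which is precisely what case~2 of Lemma~\ref{lemmaphase2}'s proof establishes. Once those lemmas are taken as given, the theorem follows by bookkeeping; I would therefore keep the proof short, essentially a two-sentence reduction to Lemmas~\ref{phase1}, \ref{lemmaphase2}, and \ref{phase2rounds} together with Properties~\ref{th:noPeriodic} and \ref{election}.
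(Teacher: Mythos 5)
Your proposal is correct and follows essentially the same route as the paper's own proof: Lemma~\ref{phase1} disposes of Phase~1 (correct gathering within $12n$ rounds if any agent terminates there, otherwise every agent has toured the ring and knows $C$), and Lemmas~\ref{lemmaphase2} and~\ref{phase2rounds} give correctness and the extra $10n$-round bound for Phase~2, with periodicity detected once $C$ is known. One minor caveat: your parenthetical claim that $C$ cannot be periodic in the ``some agent terminates in Phase~1'' branch does not follow from Property~\ref{th:noPeriodic}, which is a worst-case impossibility over adversarial schedules and does not exclude gathering in a particular execution; but this is harmless, since in that event gathering is actually achieved (compare the phrasing of Theorem~\ref{lbl:th2}), which is all the theorem requires.
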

 \begin{proof}
 If  algorithm {\sc Gather(Cross,$\not$Chir)} terminates in {\sf Phase 1}  then, by Lemma~\ref{phase1}, it correctly solves gathering and it terminates by round $12n$.
 If it terminates in {\sf Phase 2}, then by Lemma \ref{lemmaphase2}, it correctly solves gathering, and by lemma \ref{phase2rounds}
 will do so in at most $10$ additional rounds.
  Notice, that in {\sf Phase 1}, either the agents discover the initial configuration $C$ or they gather. 
  Once they know $C$, they can detect if the problem is solvable or not. This proves the last statement of the theorem
  \end{proof}
 
\subsection{Knowledge of $n$ is more Powerful Than Knowledge of $k$}

One may ask if it is possible to obtain the same result of Theorem \ref{NoChirCross} if
knowledge of  $k$ was available instead of  $n$; recall that at least one of $n$ and $k$ must be known (Property  \ref{nORk}).
Intuitively, knowing $k$,  if an agent manages to travel all along the ring,  it will discover also the value of $n$.
Unfortunately, the following Theorem shows that, from a computational point of view,  knowledge of the ring size is strictly more powerful than
knowledge of the number of agents. 

\begin{theorem}\label{theorem:impth2}
In rings with no chirality, \Gathering  is impossible without knowledge of $n$  when starting 
from a configuration $C \in {\cal E}$.  
This holds even if there is cross detection and $k$ is known.
\end{theorem}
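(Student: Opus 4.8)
The plan is to argue by contradiction. Suppose some algorithm $\mathcal{A}$ solves \Gathering from every configuration in $\mathcal{E}$ when the agents know $k$ and have cross detection but do not know $n$. The feature I want to exploit is specific to $\mathcal{E}$: the unique symmetry axis of an edge‑edge configuration crosses \emph{two} distinct edges $e_1,e_2$ --- in the static case the canonical leader is whichever of them lies in the middle of the \emph{shorter} of the two arcs between the palindrome homebases (Property~\ref{election}) --- and these two edges are not interchanged by any automorphism of the configuration, yet telling them apart requires comparing the two arc lengths, i.e.\ essentially knowing $n$. Unlike the node--node case there is moreover no axis \emph{node} on which agents could safely pile up, so any coherent meeting rule must commit to one of $e_1,e_2$. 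Since there is no chirality, I will assign the agents private orientations so that the unique reflection $\sigma$ of the configuration maps the agent set to itself, pairing the agents into mirror pairs and conjugating their left/right conventions; this is the device that lets a single missing edge per round stall a whole mirror pair.

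I would then set up an indistinguishability between two edge‑edge configurations. Fix $k$ and take $C$ on a ring of size $n$ in which the two palindrome homebases $h,h'$ are adjacent, so one arc between them is a single segment of odd length $\ell$ whose mid‑edge $e_1$ is the canonical leader, with the remaining inter‑distances chosen generic and large so that $C$ is a non‑periodic double‑palindrome with no spurious symmetry. Build a second edge‑edge configuration $C'$ on a ring of size $n'\neq n$, with the \emph{same} $k$, obtained from $C$ by resizing only the \emph{other} arc between $h$ and $h'$ --- symmetrically around its own mid‑edge $e_2$, preserving parity --- by enough to swap which of the two arcs is the shorter one; thus in $C'$ the canonical meeting place is (the resized copy of) $e_2$ while $e_1$ has become the \emph{non}-leader axis edge. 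By construction $C$ and $C'$ coincide on a ball of radius $\Theta(n)$ centred at $e_1$, both lie in $\mathcal{C}\setminus\mathcal{P}$ (so $\mathcal{A}$ must succeed on each), and the two ``correct'' outcomes are incompatible. I would now run $\mathcal{A}$ on $C$ under an adversary that (i) mirrors its own moves under $\sigma$, only ever blanking a $\sigma$-fixed edge ($e_1$ or $e_2$), which is the kind of removal it can perform without also having to remove the mirror edge; (ii) times these removals so that no mirror pair ever crosses at $e_1$ or $e_2$ --- hence no crossing ever occurs, since two mirror agents could only cross on mutually mirror, hence distinct, non‑axis edges, so cross detection conveys nothing; and (iii) never lets any agent complete a full tour, so knowing $k$ never yields $n$ and the agents can never tell $C$ from $C'$. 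Under such a schedule the two runs are, round by round, indistinguishable to every agent, so $\mathcal{A}$ performs the same actions in both.

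The conclusion is then that $\mathcal{A}$ cannot be correct on both $C$ and $C'$. If $\mathcal{A}$ never terminates under this schedule it already violates the finite‑time requirement. If it does terminate in the $C$-run, the agents sit in a $\sigma$-symmetric configuration; strict gathering is excluded by Property~\ref{obs-noNodeGathering} (and there is no $\sigma$-fixed node anyway), while a ``gathered on two neighbouring nodes'' state would have to place all agents on the endpoints of $e_1$ or of $e_2$ --- impossible, because under $\sigma$ a mirror cluster then sits on \emph{both}, and no agent has obtained the length information that would justify abandoning one of them; and certifying a two‑node gathering by remaining blocked for $\Theta(n)$ rounds is unavailable here, since without $n$ no agent can bound the waiting time. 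Replaying the identical sequence of local observations as a legal execution on $C'$ (legal precisely because the agents never left the ball where $C$ and $C'$ agree) forces the same ``terminate'' decision, but there the agents are \emph{not} co‑located, because $C'$'s global geometry separates the homebases differently --- so $\mathcal{A}$ halts claiming a gathering that has not occurred. Either way the assumed correctness of $\mathcal{A}$ on $\mathcal{C}\setminus\mathcal{P}$ is contradicted. I expect the main obstacle to be making item (iii), and its interaction with (i)--(ii), rigorous: proving that for all time a single missing edge per round suffices to simultaneously deny every agent a completed tour and deny the two mirror clusters any chance to merge onto one axis edge, without ever violating 1‑interval connectivity. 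That is exactly where the mirror‑pair orientation trick is essential, and it will require a careful round‑by‑round case analysis on which of $e_1,e_2$ (or neither) to blank at each step.
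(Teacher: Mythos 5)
Your high-level strategy is the right one and matches the paper's: give the agents mirrored orientations, let an adversary that removes only the two axis edges $e_1,e_2$ keep the execution symmetric, and then derive a contradiction by replaying the terminating agent's local history on a ring of a different size. However, your choice of the second configuration $C'$ breaks the argument at the decisive step. You resize only the arc containing $e_2$, so $e_1$ together with a large neighbourhood around it is \emph{identical} in $C$ and $C'$. If the algorithm happens to terminate in the $C$-run with all agents on the two endpoints of $e_1$ (a legal gathering, and one your adversary cannot forbid --- a cluster on one endpoint of $e_1$ mirrors under $\sigma$ to a cluster on the \emph{other} endpoint of $e_1$, which is exactly the gathered state, so your claim that such a termination is ``impossible'' is wrong), then the replayed execution on $C'$ also gathers everyone at $e_1$, and you get no contradiction. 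Your appeal to the canonical leader switching from $e_1$ to $e_2$ does not rescue this: a correct algorithm is under no obligation to meet at the canonical leader edge, so ``incompatible correct outcomes'' is not established. The paper closes this hole quantitatively: it first records the termination round $r'=f(R)$ of the symmetric execution on the small ring, and then takes a ring of size greater than $4f(R)+2$ with the two agents at distance greater than $2f(R)$ \emph{on both sides}, blocking the terminating agent at the same two relative distances $\frac{d_1-1}{2}$ and $\frac{d_2-1}{2}$ from its homebase. This guarantees that wherever the fooled agent halts within its reachable radius $f(R)$, the other agent cannot be on an adjacent node --- independently of which edge the algorithm ``chooses''.

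Two further points. First, your plan uses general $k$ with mirror pairs, but then distinct pairs may simultaneously need $e_1$ and $e_2$ removed in the same round, which 1-interval connectivity forbids; the paper avoids this entirely by taking $k=2$, where the two agents are always symmetric about the axis and hence always attempt the \emph{same} axis edge, so one removal per round suffices (and $k=2$ is enough to prove the theorem). Second, your item (iii) --- preventing full tours --- is not something you need to engineer separately: once the adversary removes $e_1$ or $e_2$ whenever an agent attempts to traverse it, no agent ever leaves its arc, so no tour is ever completed; the real work is the size-scaling step above, not tour prevention. I would recommend restructuring along the paper's lines: symmetric two-agent execution $E$ on the small ring, record $f(R)$, then the oversized ring with the same local blocking pattern.
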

\begin{proof}
 By contradiction.
 Let us suppose to have two agents  $a$ and  $b$  on a ring $R$ where the the   distances between the homebases $h_1$ and $h_2$
 are   $d_1< d_2$  and they are     both odd. 
  Let $e_1$ be the central edge between $h_1$ and $h_2$  in the smallest portion of the ring (i.e., at distance  ${{(d_1 -1)} \over 2}$ from $h_1$ and $h_2$) and 
    $e_2$ the central edge on the other side  (i.e., at distance  ${{(d_2  -1)} \over 2}$ from $h_1$ and $h_2$).
 Let us consider an execution $E$ of a correct  algorithm ${\cal A}$ starting from this configuration.  
The adversary decides opposing clockwise orientation for the two agents, and it only removes edges $e_1$ and $e_2$ during the execution of the algorithm. We will show that, by appropriately removing only this two edges, the adversary can prevent the two agents to ever see each other.
 At the beginning the agents   moves towards each other (w.l.o.g, in the direction of $e_1$).
  The adversary lets them move until they are about to traverse   edge $e_1$; at this point   edge $e_1$ is removed and     both agents are blocked  with 
  symmetric histories. After a certain amount of time,  they will either  both reverse direction or terminate. The same removal scheduling is taken whenever 
   they are about to cross either $e_1$ or  $e_2$.
  The adversary keeps following this schedule until both agents decide to terminate. Notice that for $A$ to be correct they can only terminate on the endpoints of one of the  edges $e_1$ or $e_2$. Let $r'=f(R)$ be the round  when  the agents terminate in    execution $E$.

Let us now consider the same algorithm on  a ring $R'$ of size greater than $4f(R)+2$ where the two agents are initially  placed 
at distance greater than $2f(R)$.  
Consider  agent $a$:  the adversary  removes the  edge  at distance $\frac{d_{1}-1}{2}$ on its right 
 and the one at  distance  $\frac{d_2-1}{2}$ on its     left whenever $a$ tries to traverse them. In doing so  
 $a$ does not perceive any difference  with respect to execution $E$,  and 
therefore terminates at round $r'=f(R)$. 
At this point, the other agent $b$ cannot be at the other extreme of the edge where $a$ terminated, 
therefore,   the adversary now  blocks $b$ from any further move, preventing  gathering.
A contradiction.
\end{proof}

\subsection{With Cross Detection: With Chirality}\label{CrossChi}

Let us now consider the simplest setting,  where the agents have cross detection capability as well as   a common  chirality.
In this case, the impossibility result of the previous Section does not hold, 
and a solution to \Gathering exists  also when $k$ is known  but $n$ is not.

The solution consists of  a simplification of Phase 1 of   Algorithm {\sc Gather(Cross,$\not$Chir)}, also extended to the case of $k$ known, followed by Phase 2 of   Algorithm {\sc Gather(Cross,$\not$Chir)}.


\subsubsection{Algorithm {\sc Gather(Cross, Chir)}: {\sf Phase 1}}\label{ph1:knownchir}

In case of known $n$, each agent  executes  Phase $1$ of Algorithm {\sc Gather(Cross,$\not$Chir)}  moving clockwise
until round $6n$ (if not terminating earlier) and then executing  Phase 2 of   
Algorithm {\sc Gather(Cross,$\not$Chir)}.  
By Lemma \ref{lemmaC} we know that, if   termination did not occur by this round, 
then  the ring has been fully traversed by all agents. 

In case $k$ is known (but $n$ is not), 
each agent moves counterclockwise  terminating   
if the    $k$ agents are all at the same node.
As soon as it   passes by $k+1$ homebases, it discovers $n$.
At this point, it continues to move in the same direction switching to Phase 2 
at round $3n+1$ (unless gathering occurs before).
In fact,  by Lemma \ref{observationmovements}, 
we know that,  if an agent does not perform  $n$ steps in the first $3n$ rounds, then
all agents   are  in a single group and, knowing $k$,  they can immediately terminate.
 This means that after $3n$ rounds, if the agents have not terminated, they have however
 certainly performed a loop of the ring, they know $n$ (having seen $k+1$ home bases)
 and they    switched to   Phase $2$ by round $3n+1$.

\subsubsection{Algorithm {\sc Gather(Cross, Chir)}: {\sf Phase 2}}\label{ph2:knownchir}

When Phase 2 starts,  both $n$ and $k$ are known and Phase 2 of Algorithm {\sc Gather(Cross, Chir)} is identical to the one of 
  Algorithm {\sc Gather(Cross,$\not$Chir)}.

\medbreak

\noindent We then have:

 \begin{theorem}\label{lbl:th1}
  With chirality, cross detection and  knowledge of  either  $n$ or $k$,  \Gathering is solvable   in at most ${\cal O}(n)$ rounds from  any configuration $C \in {\cal C} \setminus {\cal P}$.
 $\qed$ 
 \end{theorem}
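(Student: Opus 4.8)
The plan is to verify that Algorithm {\sc Gather(Cross, Chir)}, obtained by plugging the simplified Phase 1 into Phase 2 of {\sc Gather(Cross,$\not$Chir)}, is correct and runs in $O(n)$ rounds; the argument splits according to whether $n$ or $k$ is the known quantity, and for the most part it reduces to invoking lemmas already established for the $\neg\tt{chirality}$ algorithm.

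First I would analyze Phase 1. When $n$ is known, every agent walks clockwise for $6n$ rounds, terminating earlier only if all $k$ agents coincide, which is legitimate gathering. Since chirality means all agents share the ring orientation, $\mathcal{A}$ is a single direction class, so Lemma~\ref{observationmovements} applies with $A=\mathcal{A}$: if some agent makes fewer than $n$ steps in the first $3n$ rounds, all agents are co-located by round $3n$ and terminate correctly; otherwise every agent completes a full loop by round $3n\le 6n$, and by Lemma~\ref{lemmaC} every agent then knows $k$, knows all homebase positions, hence knows $C$. When $k$ is known, agents walk counterclockwise; again by Lemma~\ref{observationmovements} either all agents coincide within $3n$ rounds and terminate (they know $k$), or each agent makes $n$ steps in $3n$ rounds, thereby passing $k+1$ homebases, learning $n$, and having completed a loop; it then continues and enters Phase 2 at round $3n+1$. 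In both sub-cases, the agents that have not gathered enter Phase 2 simultaneously and with full knowledge of $C$; this is exactly the precondition used in the analysis of Phase 1 for the $\neg\tt{chirality}$ case (cf.\ Lemma~\ref{phase1}).

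Second, Phase 2 is textually identical to Phase 2 of {\sc Gather(Cross,$\not$Chir)}, whose correctness is Lemma~\ref{lemmaphase2} and whose $10n$-round bound is Lemma~\ref{phase2rounds}. The only inputs Phase 2 requires are $n$, $k$, $C$, and an elected leader node or edge: by Property~\ref{election}, \emph{with} chirality a leader node can be elected from every $C\in\mathcal{C}\setminus\mathcal{P}$, so the elected object always exists and {\sf shortestPathDirectionElected}() is well defined. If $C\in\mathcal{P}$, Property~\ref{th:noPeriodic} says gathering is impossible, and the agents, knowing $C$, declare unsolvability, which yields effectiveness. Hence for every $C\in\mathcal{C}\setminus\mathcal{P}$ the agents gather and terminate in the terminal state aware of success. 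Adding the round counts, Phase 1 costs at most $6n$ rounds (known $n$) or about $3n{+}1$ rounds (known $k$), and Phase 2 at most $10n$ rounds, for a total of $O(n)$.

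The main obstacle I expect is the $k$-known branch of Phase 1: one must check that the re-used Lemma~\ref{observationmovements} argument still forces either early gathering or a complete loop within $3n$ rounds, that "seeing $k{+}1$ homebases" genuinely pins down $n$ with no ambiguity (which holds precisely because a full loop was traversed), and — most delicately — that the round offset at which agents hand over to Phase 2 is uniform across all agents, so that the synchronization-sensitive thresholds in Phase 2 (the $3n$ cutoff, the $2n$ block/wait timers, the crossing bookkeeping) remain valid verbatim. Everything else in the proof is essentially a citation of the already-proved lemmas together with Property~\ref{election} and Property~\ref{th:noPeriodic}.
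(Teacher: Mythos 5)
Your proposal matches the paper's (implicit) argument: the paper states Theorem \ref{lbl:th1} with no separate proof, justifying it exactly as you do by reusing Lemmas \ref{observationmovements}, \ref{lemmaC} and \ref{lemmaAB} for the simplified single-orientation Phase 1 and Lemmas \ref{lemmaphase2} and \ref{phase2rounds} for the unchanged Phase 2, with Property \ref{election} guaranteeing a leader node under chirality. The one imprecision is in the $n$-known branch, where you have the agents ``terminate when all $k$ agents coincide'' even though $k$ is unknown there: the implementable mechanism is the \predC test inherited from {\sf Phase 1} of {\sc Gather(Cross,$\not$Chir)} --- if \predC holds at the checkpoint round then, by Lemma \ref{lemmaC} applied to the unique orientation class, all agents are already in one group, so terminating on that predicate is sound without ever learning $k$, and otherwise Lemma \ref{lemmaAB} gives the complete tour you need before handing over to Phase 2.
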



\section{Without Cross Detection}\label{sec:chiralitynocross}
 
 In this section we study the gathering problem  when there is no cross detection. 
 
 We focus first on the case when the absence of cross detection is
 mitigated by the presence of chirality. We show that 
gathering is possible in the same class of configurations as with cross detection,
albeit with a $O(n\log n)$  time complexity.
 
 We then examine  the most
 difficult case of absence of both cross detection and chirality. 
 We prove that in this case the class of feasible 
 configurations is smaller (i.e., cross detection is a computational
 separator). We show that gathering can be performed  from all
 feasible configuration in $O(n^2)$ time.

\subsection{Without Cross Detection: With Chirality}
\label{NoCrossChi}
The structure of the algorithm, {\sc Gather($\not$Cross,Chir)}, still follows the two Phases. However, 
when there is  chirality but no cross detection,  
 the difficulty lies  in the termination of Phase 2.

\subsubsection{Algorithm {\sc Gather($\not$Cross,Chir)}: Phase 1}
Notice that the Phase $1$ of Algorithm  {\sc Gather(Cross,Chir)} 
described in Section \ref{CrossChi} does not really make use of
cross detection.
So  the same Algorithm can be employed in this setting in both cases when   $n$ or $k$  are known.
Phase 1 terminates then   in ${\cal O}(n)$ rounds.
  
%

\subsubsection{Algorithm {\sc Gather($\not$Cross,Chir)}: Phase 2}\label{sec:chiralitynocrossp2}
Because of chirality, a leader node can be always elected, even when the initial configuration is in ${\cal E}$ (Property \ref{election}). We will show how to use this fact to modify  Phase $2$ of Algorithm  {\sc Gather(Cross,Chir)}  to work without assuming cross  detection.  
We will do so by designing a mechanism that will force the agents  
{\em never to cross each other}. The main consequence of this  fact is that, 
whenever two agents (or two groups of agents) would like to traverse the same edge  in opposite direction,
  only one of the two will be allowed to move thus ``merging" with the other. This mechanism is described below.

\paragraph{\bf  Basic no-crossing mechanism.} 
To avoid crossings,  each agent constructs an edge labeled bidirectional 
directed ring with $n$ nodes (called \augmented) and
it  moves on the actual ring according to the algorithm, but also  to  specific conditions 
dictated by the labels of the \augmented.
\begin{figure}[H]
  \centering
      \includegraphics[scale=0.6]{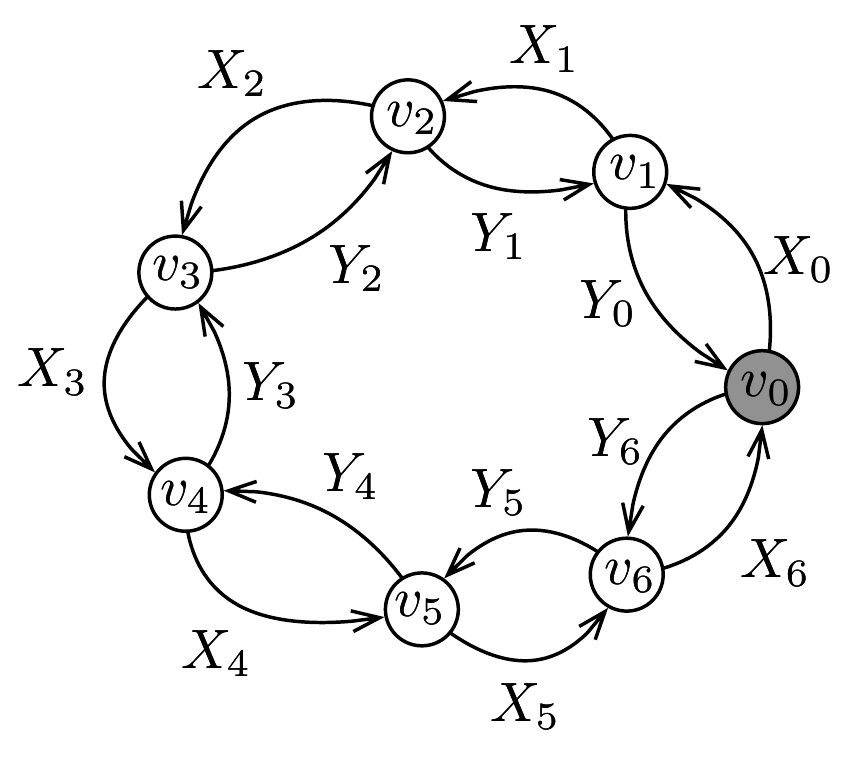}
  \caption{Example of the  \augmented   \label{fig:f1}} 
\end{figure}

In the  \augmented,  each edge  of the actual  ring  is replaced by two labeled oriented edges in the two directions. The label of each oriented edge $e_i$, $0 \leq i \leq n-1$, is either $X_i$ or $Y_i$ , where $X_i$ and $Y_i$ are infinite sets of integers.  
Labels $X_0 \ldots X_{n-1}$ are assigned to consecutive edges in counter-clockwise direction starting  from the leader node,
 while $Y_0 \ldots Y_{n-1}$  are assigned in   clockwise direction  (see Figure \ref{fig:f1}).

 Intuitively, we want to construct  these   sets of labels  in such a way that  
 $X_i$ and $Y_i$  have an empty intersection. In this way,
 the following meta-rule of movement will prevent any crossing:


\begin{quote}
{\em An agent is allowed to traverse an edge of the ring at round $r$ 
only if  $r$ is contained in the set of labels associated to the corresponding  oriented edge of the \augmented.}
\end{quote}

For this construction, we define    $X_i=\{s+m\cdot (2p+2)\mid (s\in S_i \vee s=2p), \forall m\in \mathbb N\}$,
 where $p=\lceil\log_2 n\rceil$, and $S_i$ is a subset of $\{0,1,\dots,2p-1\}$ of size exactly $p$ (note that  there are ${2p\choose p}\geq n$ possible choices for $S_i$).
 Indeed, there are $2^p = 2^{\lceil\log_2 n\rceil}\geq n$ ways to choose which elements of $\{0,1,\dots, p-1\}$ are in $S_i$; each of these choices can be completed to a set of size $p$ by choosing the remaining elements from the set $\{p,p+1,\dots,2p-1\}$. 
 Therefore there are at least $n$ available labels, and we can define the $X_i$'s so that they are all distinct. 
 Then we define  $Y_i$ to be the complement of $X_i$ for every $i$. That is, $X_i\cap Y_i=\varnothing$ and $X_i\cup Y_i=\mathbb N$.
 
  By construction, it follows that $|X_i\cap \{0,1,\dots,2p-1\}|=p$, and $|Y_i\cap \{0,1,\dots,2p-1\}|=p$, $\forall i$. 
  As a consequence, if $i\neq j$ and $m\in\mathbb N$, then $X_i$ and $Y_j$ have a non-empty intersection in $\{m,m+1,\dots,m+2p+1\}$. 
Furthermore, in  this labelling,   each $X_i$ contains all integers of the form $2p+m\cdot(2p+2)$, 
and each $Y_i$'s contains all integers of the form $2p+1+m\cdot(2p+2)$.

The following property is immediate by construction:
\begin{observation}
\label{obs:labelIntersection}
Let $m\in\mathbb N$ and let $I=\{m,m+1,\dots,m+2p+1\}$. Then, $X_i$ and $Y_j$ have a non-empty intersection in $I$ if and only if $i\neq j$, $X_i$ and $X_j$ have a non-empty intersection in $I$, and $Y_i$ and $Y_j$ have a non-empty intersection in $I$.
\end{observation}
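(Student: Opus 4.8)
The plan is to reduce everything to a statement about residues modulo $2p+2$. Since the interval $I=\{m,m+1,\dots,m+2p+1\}$ consists of exactly $2p+2$ consecutive integers, and since (on $\mathbb N$) each of $X_i$ and $Y_i$ is invariant under shifting by $2p+2$, a $(2p+2)$-periodic subset of $\mathbb N$ meets $I$ if and only if its set of residues modulo $2p+2$ is nonempty; in particular the answer does not depend on $m$. Writing $\bar X_i=\{\,x\bmod(2p+2):x\in X_i\,\}=S_i\cup\{2p\}$ and $\bar Y_i=\{0,1,\dots,2p+1\}\setminus\bar X_i$, the three ``non-empty intersection in $I$'' conditions translate to $\bar X_i\cap\bar Y_j\neq\varnothing$, $\bar X_i\cap\bar X_j\neq\varnothing$, and $\bar Y_i\cap\bar Y_j\neq\varnothing$. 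So the whole Observation becomes a purely combinatorial statement about the sets $\bar X_0,\dots,\bar X_{n-1}$ and their complements.

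For the right-to-left implication I would note that the single conjunct $i\neq j$ already suffices: by the consequence established in the sentence immediately preceding the Observation (distinct labels $X_i$ and $Y_j$ have a non-empty intersection in any window of $2p+2$ consecutive integers), $i\neq j$ forces $X_i\cap Y_j\cap I\neq\varnothing$. For the left-to-right implication I would show the left-hand side forces all three conjuncts. First, $i\neq j$, since $X_i\cap Y_i=\varnothing$ by the definition of $Y_i$. Second, $X_i$ and $X_j$ meet in $I$: every $X_\ell$ contains all integers of the form $2p+m(2p+2)$, exactly one of which lies in $I$, and that integer belongs to both $X_i$ and $X_j$. Third, $Y_i$ and $Y_j$ meet in $I$ by the identical argument applied to the integers of the form $2p+1+m(2p+2)$, which belong to every $Y_\ell$. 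This establishes the equivalence.

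I do not expect a real obstacle here; the only two points needing a line of care are (i) the reduction in the first paragraph — justifying that restricting a $(2p+2)$-periodic subset of $\mathbb N$ to a length-$(2p+2)$ window is equivalent to passing to residues, so that $I$ contributes nothing beyond its length — and (ii) the fact that the $X_i$'s are pairwise distinct by construction, which together with $|S_i|=p$ for all $i$ yields $\bar X_i\not\subseteq\bar X_j$, i.e.\ $\bar X_i\cap\bar Y_j\neq\varnothing$, whenever $i\neq j$. Point (ii) is exactly what was already extracted in the sentence before the Observation, so in the write-up it can simply be cited rather than re-proved, which is why the Observation is fairly described as immediate by construction.
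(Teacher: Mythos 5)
Your proof is correct and follows exactly the route the paper intends: the paper offers no explicit argument (it declares the observation ``immediate by construction''), and the facts you invoke --- the reduction to residues modulo $2p+2$, the already-established claim that $i\neq j$ forces $X_i\cap Y_j\cap I\neq\varnothing$, and the common elements $2p+m(2p+2)$ and $2p+1+m(2p+2)$ shared by all $X_\ell$'s and all $Y_\ell$'s respectively --- are precisely those laid out in the construction paragraph preceding the statement. Your additional remark that the last two conjuncts hold unconditionally, so the biconditional really reduces to $X_i\cap Y_j\cap I\neq\varnothing\iff i\neq j$, is an accurate reading of the (slightly degenerate) statement and is consistent with how the observation is used in Lemma~\ref{nodetect:obs1}.
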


From  the previous  observation, it follows that two agents moving following the \augmented   in opposite directions
 will never cross each other on an edge  of the actual  ring.

As a consequence of this fact, we can derive a bound on the number of rounds that guarantee two groups of robots moving in opposite direction, to ``merge". In the following lemma, we consider the execution of the algorithm proceeding in
 {\em periods}, where each period is composed by $2p+2$ rounds. We have:

\begin{lemma}\label{nodetect:obs1}
Let us consider two groups of agents, $G$ and $G'$, moving in opposite directions following the \augmented. After at most $n$ periods, that is at most ${\cal O}(n \log n)$ rounds, the groups will be at a distance $d \leq 1$ (in the direction of their movements). 
\end{lemma}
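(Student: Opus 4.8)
The plan is to track the signed distance between the two groups $G$ and $G'$ over one period of $2p+2$ rounds and show that, as long as this distance is at least $2$, it must strictly decrease during the period. Fix a period, i.e. a block of rounds $I=\{m,m+1,\dots,m+2p+1\}$ for some $m$ that is a multiple of $2p+2$. Let $G$ occupy node $u$ at the start of the period and move, say, counter-clockwise, while $G'$ occupies node $w$ and moves clockwise, so that they approach each other along the arc between $u$ and $w$ in the direction of motion; let $d\ge 2$ be the number of edges on that arc. The edges on this arc that $G$ wants to traverse carry \augmented-labels of the form $X_i$ (counter-clockwise orientation), and the edges $G'$ wants to traverse carry labels of the form $Y_j$; crucially, since $d\ge 2$, the \emph{first} edge that $G$ would traverse and the \emph{first} edge that $G'$ would traverse are \emph{distinct} edges of the ring, hence carry indices $i\neq j$ (in fact all the relevant edges are distinct as long as the groups are more than one edge apart, and the two ``frontier'' edges have distinct indices).

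First I would argue that within the period, $G$ makes progress. By the meta-rule, $G$ can move across its frontier edge $e_i$ at any round in $I$ that lies in $X_i$; by construction $X_i$ contains the round $m+2p$ (the integer of the form $2p+c\cdot(2p+2)$ inside $I$). The only thing that can stop $G$ from moving at such a round is either the adversary removing $e_i$, or $G$ already having advanced. If the adversary removes $e_i$ at that round, then by 1-interval-connectivity the symmetric edge available to $G'$ — actually, I would instead observe that $G'$'s frontier edge is a \emph{different} edge (since $d\ge2$), so removing $e_i$ does not block $G'$; and $G'$ gets to use round $m+2p+1\in Y_j$ to cross, so $G'$ advances instead. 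Thus in each period at least one of the two groups advances by one edge along the approach arc (and neither group ever retreats past the other, because Observation \ref{obs:labelIntersection} and the $X_i\cap Y_i=\varnothing$ property guarantee no crossing ever occurs on a shared edge). Hence the distance $d$ strictly decreases by at least $1$ per period as long as $d\ge 2$. Starting from $d\le n-1$, after at most $n$ periods we reach $d\le 1$, which is $O(n\log n)$ rounds since each period has $2p+2 = O(\log n)$ rounds.

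The main obstacle I anticipate is making the ``at least one group advances each period'' argument fully rigorous in the face of an adaptive adversary that removes one edge per round: one must check that the adversary cannot, within a single period, alternately block whichever group is currently trying its ``guaranteed'' round. The resolution is that the two frontier edges are distinct when $d\ge 2$, so a single missing edge per round cannot simultaneously block both groups; combined with the fact that $X_i$ guarantees $G$ a crossing opportunity at round $m+2p$ and $Y_j$ guarantees $G'$ one at round $m+2p+1$ (and these are consecutive rounds), the adversary is forced to let one of them through. A secondary subtlety is the boundary case where the groups are at distance exactly $2$: after one more successful advance they reach distance $1$, and the ``distinct frontier edges'' hypothesis is exactly what is needed for that last step; once $d\le 1$ the lemma's conclusion is already met and no further analysis is required here. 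I would also remark that the argument uses only that $G$ and $G'$ consistently move in their respective directions during this interval, which is exactly how they behave in the relevant states of Phase 2.
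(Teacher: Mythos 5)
There is a genuine gap in the central step of your argument, namely in how you force one of the two groups to advance within a period. You rely on the ``universal'' rounds: $G$ is allowed to cross its frontier edge $e_i$ at round $m+2p\in X_i$, and $G'$ is allowed to cross its frontier edge $e_j$ at round $m+2p+1\in Y_j$, and you conclude that since these rounds are consecutive and $e_i\neq e_j$, the adversary ``is forced to let one of them through.'' It is not: the adversary removes one edge \emph{per round} and may remove a \emph{different} edge at each round. It can remove $e_i$ at round $m+2p$ (blocking $G$, while $G'$ is not even attempting to move then, since $m+2p\in X_j$ implies $m+2p\notin Y_j$) and then remove $e_j$ at round $m+2p+1$ (blocking $G'$, while $G$ is idle). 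The constraint ``one missing edge per round'' only bites at a round in which \emph{both} groups simultaneously attempt to cross two \emph{distinct} edges, and your two guaranteed rounds are never simultaneous.

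The missing ingredient is exactly the first clause of Observation~\ref{obs:labelIntersection}, which you cite only for the no-crossing property: since $d\geq 2$ implies $i\neq j$, the sets $X_i$ and $Y_j$ have a common element in every window $I=\{m,\dots,m+2p+1\}$. At such a round both $G$ and $G'$ attempt to traverse their (distinct) frontier edges, so the single missing edge can block at most one of them and the other advances. This is the paper's argument; the rest of your structure (distance decreases by at least one per period, at most $n-1$ initial distance, $2p+2={\cal O}(\log n)$ rounds per period, and the $d\leq 1$ base case where the two frontier edges coincide) matches the paper and is fine once this step is repaired.
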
 
\begin{proof}
Without loss of generality, let us assume that $G$ and $G'$ are initially positioned on two nodes, respectively $v$ and $v'$, trying to traverse two edges incident to  $v$ and $v'$.
If the two edges have labels that are the complement of each other  in the \augmented then, by construction, they are trying to traverse the same  edge in the actual ring  in opposite directions, and the lemma follows.

Let us then assume now that the two groups are trying to traverse edges whose  labels in the \augmented    are not the complement of each other. Since these  sets of labels  have a non empty intersection (Observation~\ref{obs:labelIntersection}), it follows that, in each period of $2p+2$ rounds, the adversary can block at most one of the two groups. Thus, there exists a round $r$ in which both groups try to cross two different edges, and at least one of them will succeed, hence moving of one step in the direction of the other group. Therefore, after at most $(n-1)(2p+2)$ rounds the two groups will be at a distance at most one in the directions of their movements. Since each period has ${\cal O}(\log n)$ rounds, the lemma follows. 
\end{proof}

 \begin{figure}[H]
\footnotesize

\begin{framed}
\begin{algorithmic}
\State States: \{\sReached,  \sReaching, {\sf ChangeDir}, {\sf ChangeState},  {\sf DirCommR}, {\sf DirCommS}, {\sf Term}\}.
\AtState{Phase 2}
 \If {$C \in {\cal P}$}
 \State {\sf unsolvable}()
    \State {\sf Go to State} {\sf Term}
 \EndIf
  \State resetAllVariables except \vTotOthers
    \State $dir={\sf leaderMinimumPath}$()
    \State \Explore({$dir\,\,|$ $\emph{seeElected} $: \sReached; $\vTtime=3n$: \sReaching  })
\AtState{\sReached}

    \If {\vTtime$\geq 3n$} 
    
        \State $dir={\sf clockwiseDirection}$()

    \State \Explore({$dir \,\,|$ $(\vBPtime \geq 4n+8 \,\, \lor  \,\, \vOthers=\vTotOthers)$: {\sf Term};})
    \EndIf
\AtState{\sReaching}
    \If {\vTtime$= 3n$}
    \State $dir={\sf counterclockwiseDirection}$()
     \EndIf
    \State \Explore({dir $|$ $(\vBPtime \geq 4n+8 \,\, \lor \,\, \vOthers=\vTotOthers)$: {\sf Term};})
  \END
\end{algorithmic}
\end{framed}	
\caption{{\sf Phase 2} of  Algorithm {\sc Gather($\not$Cross,Chir)} 
\label{algp2c1:gatheringnochirnocross}}
\end{figure}

We are now ready to describe the second Phase of the algorithm.
\paragraph{\bf {\sf Phase 2}.} 
In the following, when the agents are moving following the meta-rule in  the \augmented, we will use   variable $\vBPtime$,
 instead of $\vBtime$, indicating the number of consecutive periods in which the agent failed to traverse the current edge. 
 As in the case of  $\vBtime$, the new variable $\vBPtime$ is reset each time the agent traverses the edge, changes direction, or encounters new agents in its moving direction.

In the first $3n$ rounds,  each agent moves towards the elected node using the minimum distance path. After round $3n$, the agents move on the \augmented ring:  the group in state \sReached starts moving in clockwise direction, the group in state  \sReaching in counterclockwise. 
One of the two groups  terminates if $\vBPtime \geq n$ rounds or if $\vOthers=k$. This replaces the terminating condition $\vBtime=2n$ that was used in case of Cross detection.
Phase 2 of the Algorithm is shown in Figure \ref{algp2c1:gatheringnochirnocross}.
 
\begin{lemma}\label{lbl:lemmaChir}
 Phase 2 of    Algorithm {\sc Gather($\not$Cross,Chir)}  
  terminates in at most ${\cal O}(n  \log n)$ rounds, solving the \Gathering problem.
\end{lemma}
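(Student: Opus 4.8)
The plan is to establish correctness (gathering is achieved with all agents terminating) and the round bound separately, reusing the structure of the proof of Lemma~\ref{phase2rounds} but replacing the cross-detection arguments with the no-crossing meta-rule.

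First I would observe that Phase~1 (being that of Algorithm {\sc Gather(Cross, Chir)}) has already, by the discussion in Section~\ref{ph1:knownchir} and Lemma~\ref{lemmaC}, either gathered the agents or given every agent full knowledge of the configuration $C$ within ${\cal O}(n)$ rounds; hence when Phase~2 starts each agent knows $n$, $k$, and whether $C\in{\cal P}$, and if $C\in{\cal P}$ they correctly declare impossibility (Property~\ref{th:noPeriodic}). So I would assume $C\in{\cal C}\setminus{\cal P}$, in which case, by chirality and Property~\ref{election}, a leader node $v_L$ is elected. In the first $3n$ rounds every agent walks toward $v_L$ along the minimum-distance path, so by Lemma~\ref{observationmovements} (applied exactly as in Lemma~\ref{observation2}) at round $3n$ there is at most one group still in state \sReaching\ and at most one group in state \sReached\ (here only one, since with a leader \emph{node} the two directions collapse to a single endpoint). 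Thus from round $3n$ on there are at most two groups: the \sReached\ group moving clockwise on the \augmented, and the \sReaching\ group moving counterclockwise. If there is only one group, $\vOthers=\vTotOthers=k$ already holds and everyone terminates immediately, solving gathering.

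Next I would handle the two-group case, which is the heart of the lemma. The two groups move in opposite directions on the \augmented, so by Observation~\ref{obs:labelIntersection} they never cross on an actual edge; by Lemma~\ref{nodetect:obs1} within $n$ periods, i.e. ${\cal O}(n\log n)$ rounds, they come to distance $d\le 1$ in their directions of movement. From that moment one of two things happens: either they meet on a common node (the shared edge is present for a round whose label lies in both relevant sets), in which case $\vOthers=k$ and both groups terminate correctly; or the single edge separating them is kept missing forever by the adversary, in which case each group is blocked on that edge for $\ge 4n+8$ rounds ($\ge n$ periods) and terminates on the condition $\vBPtime\ge 4n+8$ — and since they sit on the two endpoints of that one missing edge, the final positions constitute a valid (non-strict) gathering. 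Either way all $k$ agents terminate and gathering is achieved. The main obstacle here is making rigorous the claim that the two groups cannot ``slip past'' one another once at distance $1$ — but this is precisely what the label construction guarantees: when the two groups want to traverse the \emph{same} actual edge in opposite directions, their \augmented\ labels are complementary, so at most one of them is ever permitted to move in any given round, and that one merges into the other rather than crossing; when they want distinct edges the adversary can block only one of the two per period (Observation~\ref{obs:labelIntersection}), so progress toward distance $\le 1$ is forced.

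Finally I would total the rounds: $3n$ for the approach phase, plus ${\cal O}(n\log n)$ (at most $(n-1)(2p+2)$ rounds with $p=\lceil\log_2 n\rceil$) by Lemma~\ref{nodetect:obs1} for the two groups to reach distance $\le 1$, plus at most one further blocking window of $4n+8$ rounds before the $\vBPtime$ condition fires, plus ${\cal O}(n)$ slack for a \sReaching\ group that first reaches $v_L$, switches to \sReached, and reverses (this adds only a constant number of extra traversals of the ring). The dominant term is ${\cal O}(n\log n)$, giving the stated bound, and the correctness argument above shows the \Gathering\ problem is solved. The one bookkeeping point to check carefully is that the $\vBPtime$ counter (reset on traversal, direction change, or meeting new same-direction agents) is never spuriously reset during a long adversarial block, so that the $4n+8$ threshold is actually reached; this follows because a blocked group neither moves nor changes direction nor acquires new same-direction companions while it waits at the missing edge.
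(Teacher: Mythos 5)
Your proof is correct and follows essentially the same route as the paper's: Lemma~\ref{observation2} to reduce to at most one \sReached and one \sReaching group, Observation~\ref{obs:labelIntersection} and Lemma~\ref{nodetect:obs1} to force the two opposite-moving groups to distance $\le 1$ within ${\cal O}(n\log n)$ rounds without ever crossing, and then termination either by $\vOthers=k$ (merge) or by the $\vBPtime$ threshold on the two endpoints of the one persistently missing edge. The only nit is a units slip on $\vBPtime$ (it counts \emph{periods}, so the blocking window is $O(n\log n)$ rounds rather than $4n+8$ rounds), which does not affect the stated bound.
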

\begin{proof}
 Let us first prove that the algorithms terminates in ${\cal O}(n  \log n)$  rounds.
At the end of round $3n$ of Phase 2, we have at most one group of agents in state  \sReached and one group  in state \sReaching 
(Lemma \ref{observation2} derived in the case with cross-detection still holds).
 If there is only one of these group,  termination is immediate from condition  $\vOthers= k$.
If both groups are present (moving in opposite direction by construction) we have that, by Lemma \ref{nodetect:obs1} the two  groups will be at distance $1$ by at most round $3n+ n(2p+2)$, where $p$ is a quantity bounded by ${\cal O}(\log n)$. 
At this point,  they either meet in one node because only one of the two group will be allowed to cross the edge, and therefore they terminate by condition  $\vOthers= k$, or they are blocked by the adversary on two endpoints of the same edge. In this case, however,  they will terminate e  by condition $\vBPtime \geq n$. 
Notice  that,  if a group $G$ terminates by $\vBPtime \geq n$ gathering will be achieved, because by Lemma \ref{nodetect:obs1}, we have that the other group $G'$ is at the other endpoint of the edge where $G$ has been blocked. Therefore, $G'$ either terminates by condition $\vBPtime \geq n$, or it reaches the node where $G$ is and it terminates by condition $\vOthers= k$.
\end{proof}

From the previous Lemma, and the correctness of Phase 1 already discussed in Section \ref{CrossChi},
 the next theorem immediately follows. 

 \begin{theorem}\label{lbl:th2}
 With chirality  and  knowledge of $n$ or $k$,  \Gathering is solvable  from any configuration $C \in {\cal C} \setminus {\cal P}$. Moreover, there exists an algorithm solving \Gathering that terminates in ${\cal O}(n  \log n)$ rounds for any $C\in  {\cal C} \setminus {\cal P}$,  if $C\in {\cal P} $ the algorithm either solves \Gathering or it detects that the configuration is in ${\cal P}$. $\qed$
 \end{theorem}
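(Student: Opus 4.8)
The statement follows by combining the correctness of the two phases of Algorithm {\sc Gather($\not$Cross,Chir)}. The plan is to argue the theorem in three parts: (i) Phase 1 either gathers or equips every agent with full knowledge of the configuration $C$ within $O(n)$ rounds; (ii) if $C\in{\cal P}$ the agents detect this and declare impossibility; (iii) if $C\in{\cal C}\setminus{\cal P}$ then Phase 2 correctly gathers in $O(n\log n)$ additional rounds. Summing the two phase bounds gives the claimed $O(n\log n)$ total, and the impossibility detection claim for $C\in{\cal P}$ reduces to Property \ref{th:noPeriodic} plus the fact that once $C$ is known the agents can simply test periodicity locally.

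First I would invoke Section \ref{CrossChi}: Phase 1 of {\sc Gather(Cross, Chir)} is reused verbatim here, and as already noted it never actually uses cross detection, so it is legitimate in the $\neg\tt{detection}$ setting. By Lemma \ref{lemmaC} (if $n$ is known) or by Lemma \ref{observationmovements} together with the $k$-homebase-counting argument (if $k$ is known), within $O(n)$ rounds either all agents have merged into a single group and terminate immediately --- gathering trivially solved --- or every agent has completed a full tour of the ring, hence knows $n$, the positions of all $k$ homebases, and therefore the initial configuration $C$. This is exactly the hypothesis under which Phase 2 starts. If at this point $C\in{\cal P}$, each agent computes the period locally and invokes {\sf unsolvable}() (state {\sf Phase 2} in Figure \ref{algp2c1:gatheringnochirnocross}); correctness of this detection is Property \ref{th:noPeriodic}.

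For the case $C\in{\cal C}\setminus{\cal P}$, I would lean entirely on Lemma \ref{lbl:lemmaChir}, which states that Phase 2 of {\sc Gather($\not$Cross,Chir)} terminates in $O(n\log n)$ rounds and solves \Gathering. The ingredients behind that lemma are: by chirality a leader \emph{node} can always be elected (Property \ref{election}), even from configurations in ${\cal E}$; Lemma \ref{observation2} (which is stated for the cross-detection case but depends only on the movement of agents toward the elected node, so it carries over) guarantees at most one \sReached\ group on each side and at most one \sReaching\ group after round $3n$; and the \augmented/meta-rule construction with labels $X_i,Y_i$ guarantees that two oppositely-moving groups never cross (Observation \ref{obs:labelIntersection}) while still closing the gap between them within $n$ periods, i.e.\ $O(n\log n)$ rounds (Lemma \ref{nodetect:obs1}). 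Hence two opposing groups either merge on a common node (detected by $\vOthers=\vTotOthers$) or sit blocked on the two endpoints of the missing edge, where each terminates by $\vBPtime\geq n$ --- and in that last case Lemma \ref{nodetect:obs1} pins the other group to the opposite endpoint, so the final configuration is a legal (edge) gathering.

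The only genuine subtlety --- and the step I would treat most carefully --- is checking that the auxiliary lemmas proved for the cross-detection algorithm (Lemmas \ref{observation2} and the reasoning of \ref{observationmovements}) really do transfer unchanged: Lemma \ref{observation2} is safe because its proof only uses disjointness of shortest paths plus Lemma \ref{observationmovements}, neither of which mentions cross detection; and the replacement of the termination guard $\vBtime=2n$ by $\vBPtime\geq n$ is sound precisely because, in the no-crossing regime, a group blocked for $n$ periods certifies --- via Lemma \ref{nodetect:obs1} --- that every oppositely-directed group has already reached the far endpoint of the blocking edge. Once these transfers are justified, the theorem is just the concatenation of Phase 1 ($O(n)$) and Phase 2 ($O(n\log n)$), together with the impossibility-detection remark for ${\cal P}$. $\qed$
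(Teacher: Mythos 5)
Your proposal is correct and follows essentially the same route as the paper, which proves this theorem by simply combining the correctness of Phase~1 (reused from Section~\ref{CrossChi}, where it is noted that cross detection is never actually used) with Lemma~\ref{lbl:lemmaChir} for Phase~2, plus the observation that knowledge of $C$ lets the agents detect periodicity. Your additional care in checking that Lemma~\ref{observation2} transfers to the no-detection setting matches the remark the paper makes inside the proof of Lemma~\ref{lbl:lemmaChir}.
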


\subsection{Without Cross Detection: Without Chirality}
\label{NoCrossNoChi}

In this section, we consider the most difficult setting when neither cross detection nor chirality are available.
We   show that in this case \Gathering 
 is impossible if $C\in {\cal E}$.  
 On the other hand, we provide a solution for  rings of known size from any initial configuration
$C\in {\cal C}\setminus ({\cal P}\cup{\cal E})$, which  works  in ${\cal O}(n^2)$ rounds. 
  We start this Section with the impossibility result.

\subsubsection{Impossibility for $C\in {\cal E}$\label{sec:impossE}}
\begin{theorem}\label{th:nocrossimposs}
Without chirality and without cross detection,  \Gathering is impossible when starting 
from a configuration $C \in {\cal E}$. This holds even if the agents   know $C$ (which implies 
knowledge of $n$ and $k$). 
\end{theorem}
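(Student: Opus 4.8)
The plan is to adapt the symmetry-preserving adversary argument used in Theorem~\ref{theorem:impth2}, but now exploiting the \emph{additional} symmetry available when the agents lack chirality. Since $C\in{\cal E}$ is a double-palindrome edge-edge configuration, it has a unique axis of symmetry passing through two edges; call them $e_1$ and $e_2$. Consider $k=2$ agents $a$ and $b$ placed on the two homebases $h_1,h_2$, with the two arcs between them of odd lengths $d_1<d_2$, so that $e_1$ is the central edge of the short arc and $e_2$ the central edge of the long arc. The adversary will (i) assign the two agents \emph{mirror-image} private orientations (one clockwise, one counterclockwise), and (ii) remove \emph{only} $e_1$ and $e_2$, and only at the moments when some agent is about to traverse one of them.

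The key step is an invariant: at every round, the global configuration (positions of $a$ and $b$, their states, their local memories) is invariant under the reflection through the axis, with the roles of $a$ and $b$ swapped. First I would check the base case: at round $0$ the two agents sit on the mirror-image homebases, in state {\sf Init}, with mirror-image orientations, so the configuration is symmetric. For the inductive step, since both agents run the same deterministic protocol and their local views are mirror images of each other, they take mirror-image actions: if $a$ moves toward $e_1$ then $b$ moves toward $e_1$ from the other side; if $a$ moves toward $e_2$ then so does $b$. In either case the two agents attempt to traverse \emph{the same} edge ($e_1$ or $e_2$) from opposite ends, so the adversary removes that edge, both are blocked with mirror-image histories, and symmetry is preserved; if instead both stay put, symmetry is trivially preserved. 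Crucially, because there is no cross detection, no agent ever gets any information distinguishing ``blocked because the edge is missing'' in the two symmetric situations, and no agent ever crosses another. Hence the symmetry invariant holds forever, which means $a$ and $b$ are never at the same node, and moreover they are always at mirror-image positions, so they can only ever be at the two endpoints of $e_1$ or the two endpoints of $e_2$ simultaneously.

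Now I would derive the contradiction. Suppose ${\cal A}$ is a correct effective algorithm. By Property~\ref{obs-noNodeGathering} strict gathering is impossible, so correctness requires the two agents to eventually gather on the two endpoints of a single edge and both terminate aware of it. By the symmetry invariant the only candidate edges are $e_1$ and $e_2$, and indeed at some round $r^\star$ both agents must be blocked on the endpoints of, say, $e_1$ (or $e_2$) and declare termination. But here I would invoke an \emph{indistinguishability / ring-stretching} argument exactly as in Theorem~\ref{theorem:impth2}: run the same algorithm on a much larger edge-edge ring $R'$ (the two arcs still odd, but now both far longer than anything the agents can traverse before round $r^\star$), placing the two edges to be removed at the same local distances from $a$ and $b$ as $e_1,e_2$ were in $R$; the adversary removes only those two local edges. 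Then $a$'s and $b$'s entire executions are identical to the original one up to round $r^\star$ (they can never reach far enough to notice $R'$ is bigger), so they both terminate at round $r^\star$ claiming gathering — yet in $R'$ the two ``removed'' edges are \emph{not} the same pair of edges globally seen by the two agents, so $a$ and $b$ are blocked on the endpoints of two \emph{distinct}, far-apart edges, and the adversary then freezes both by keeping a suitable edge missing forever. This violates correctness, a contradiction.

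The main obstacle I anticipate is making the symmetry invariant fully rigorous in the presence of the port-labeling freedom and the buffer structure of the model: one must argue that the adversary can also choose the port labels of $R$ (and of the doubled positions) so that the two agents' local observations — including which buffer agents sit in, and the labels of incoming vs.\ outgoing ports — are genuine mirror images, and that this mirror relation is preserved by one round of the {\aLook}/{\aCompute}/{\aMove} cycle. A secondary subtlety is that when both agents are blocked on $e_1$ they might both decide to reverse and head for $e_2$; one has to check that the adversary's ``remove-only-$e_1$-and-$e_2$'' schedule is consistent (it never needs to remove a third edge, because both agents always aim at the same one of $e_1,e_2$ by symmetry), and that the agents cannot ``collide in transit'' on any other edge — which again follows from symmetry, since a symmetric pair of moves on a non-axial edge is impossible (the two mirror-image edges are distinct and neither agent would be on both). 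Once these are nailed down, the stretching argument is essentially the same as in Theorem~\ref{theorem:impth2}.
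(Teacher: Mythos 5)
Your symmetric-adversary invariant (the first half) is fine as far as it goes, but the step that is supposed to produce the contradiction --- stretching the ring to a larger $R'$ and invoking indistinguishability --- fails in this setting. Theorem~\ref{th:nocrossimposs} asserts impossibility \emph{even when the agents know $C$, hence $n$}. An algorithm may therefore use $n$ as an input, and its executions on $R$ and on a strictly larger ring $R'$ need not agree at all, no matter how far away you place the removed edges: on $R'$ the agents simply run the protocol with parameter $n'\neq n$. The stretching argument is exactly the one used for Theorem~\ref{theorem:impth2}, and it is sound there precisely because that theorem concerns the setting where $n$ is \emph{not} known; importing it into the known-$n$ setting is the gap. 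Note also that your first half alone cannot close the argument: under your symmetric schedule the two agents really can end up on the two endpoints of $e_1$ (or $e_2$), and that \emph{is} a legal gathering, so the symmetric execution by itself yields no contradiction.

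The paper stays on the same ring $R$ and perturbs the execution instead. It takes as reference the \emph{fault-free} execution $E$ (no edge ever removed): by symmetry the two agents can only cross on the two axis edges, and since there is no cross detection such crossings leave no trace in either agent's view, so $a_l$'s entire history in $E$ is independent of $a_r$. One then builds $E'$ in which the adversary --- whose one-edge-per-round budget is completely unspent, because $E$ removes nothing --- blocks $a_r$ at suitably chosen moments so that $a_l$ sees exactly what it saw in $E$ and terminates at the same endpoint of $e_u$ at the same round, while $a_r$ is kept at distance at least two and then frozen forever. If you wanted to repair your proof you would need an analogous same-ring perturbation of your symmetric execution; that is harder with your schedule, because the adversary is already spending its single removal each round on $e_1$ or $e_2$ to maintain the symmetry, leaving no freedom to displace one agent while preserving the other's view.
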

\begin{proof}

\begin{figure}[H]
  \centering
      \includegraphics[scale=0.5]{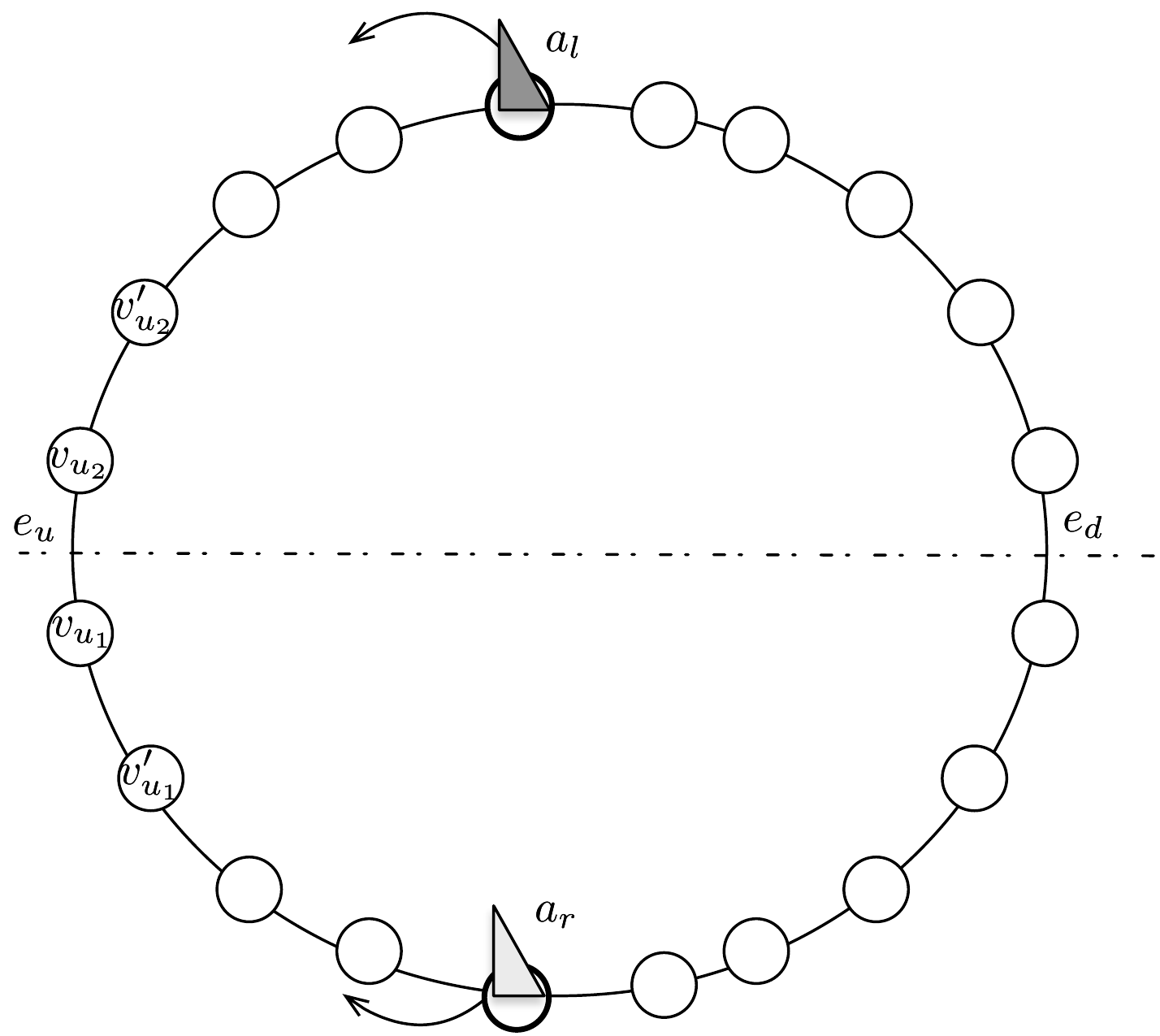}
  \caption{Configuration used to prove the impossibility of \Gathering when the configuration is in ${\cal E}$ and there is no cross detection\label{fig:f4}}
\end{figure}
By contradiction.
Consider an initial configuration $C$ 
with two agents $a_l,a_r$,   a unique axis of symmetry  passing through edges $e_{u},e_{d}$, and where the two homebases $h_l,h_r$ are at distance at least $4$ from $e_u$ and $5$ from $e_d$ (see an example in Figure \ref{fig:f4}). 
Let $A$ be an algorithm that solves gathering starting from configuration  $C$ 
in an execution $E$ where the adversary does not remove any edge. Note that,  because of symmetry,
without edge removals  the two agents  can 
cross each other only over $e_u$ or $e_d$  never meeting in the same node at the same time, so gathering could be achieved only on the two endpoint of one of these edges.
 Let us suppose, w.l.o.g,  that $A$ terminates when the two agents are on the endpoints   of edge $e_u = (v_{u_1},v_{u_2})$. 
Let $v_{u_2}'$ be  the  neighbour of $v_{u_2}$  different from $ v_{u_1}$ (resp. $v_{u_1}'$    the  neighbour of $v_{u_1}$  different from $ v_{u_2}$).
 Let $r_{f}$ be the   round in which $a_l$ reaches $v_{u_1}$ and terminates (note that $a_l$  could have passed by $v_{u_1}$ several time before, without terminating;  let $r_1$, possibly equal to $r_f$, be  the first round when $a_l$ reaches $v_{u_1}$). 
Agent  $a_l$ may reach   $v_{u_1}$ at round $r_{f}$ in two ways:
{\em Case 1)} after performing  a loop of the ring starting from $v_{u_1}$ 
(note that, during the loop, the agents may go back and forth over some nodes several times).
{\em Case 2)}  after  moving   in a certain direction for $X$ step and then  back for other $X$ step, possibly moving back and forth over some nodes several times.
In either case,  agent $a_r$ does exactly the symmetric moves of $a_l$ with respect to the symmetry  axis. 

Let us now   consider  an execution $E'$   starting from $C$  where the agents behave like in execution $E$ 
until they possibly find themselves blocked by an edge removal. 
We will show that the  edge removal  schedule chosen by the adversary
does not influence agent $a_l$, which behaves exactly as in execution $E$ terminating 
 in node $v_{u_1}$  at round $r_f$,  but  gathering is not achieved.   \\
%
No edge removal is done on the way of agent $a_l$  until  it  terminates  in node   $v_{u_1}$. 
If  $a_l$  does so by looping around the ring   ({\em Case 1}), also $a_r$  is performing an opposite loop and 
 the adversary blocks   $a_r$, on an endpoint of $e_d$, after the agents cross each other, for the last time,
 on $e_d$ during their loop. Regardless of the decision taken by $a_r$ at this point, 
when $a_l$ terminates, $a_r$ is at at least two  edges apart.
If $a_l$ is  reaching $v_{u_1}$ after moving for $X$ steps and coming back  ({\em Case 2}), $a_r$ is performing the symmetric moves 
and    the adversary behaves differently depending on various sub-cases.
{\em Case 2.1)} Assume first that $a_l$ (resp. $a_r$) leaves the set of nodes  $\{v_{u_1},v_{u_2}$, $v_{u_2}'\}$ 
(resp. $\{v_{u_2},v_{u_1}$, $v_{u_1}'\}$) at least once after round $r_1$.  In this case,  
if the agents do not traverse  $e_d$ or $e_u$,  then the adversary blocks $a_r$ the last time it leaves node $v_{u_2}$;
if instead  they  traverse  $e_d$, then $a_l,a_r$  cross on $e_d$ and   the adversary blocks $a_r$ on an endpoint of $e_d$ after their last cross.
Finally, if  the agents cross each other on $e_u$, then the adversary blocks $a_r$ as soon as it moves from $v_{u_1}'$.
In all these situations, when $a_l$ reaches $v_{u_1}$, $a_r$  is at least two edges apart.
{\em Case 2.2)} Assume now that $a_l$ never leaves the set of nodes  $\{v_{u_1},v_{u_2}$, $v_{u_2}'\}$ after round $r_1$. 
 The adversary   blocks agent $a_r$  right before it is entering for the first time in the set of nodes $\{v_{u_2},v_{u_1}$, $v_{u_1}'\}$,
 this would be undetectable by $a_l$, and, by construction,  $a_r$ would  be at distance at least $2$ from $v_{u_1}$, when $a_l$ terminates.
  
Being   run $E'$ undistinguishable for $a_l$ from the execution $E$, 
we have that, in $E'$,   $a_l$ terminates on $v_{u_1}$, while agent $a_r$ 
is not on a neighbour node of $v_{u_1}$.  
At this point the adversary blocks $a_r$  from any further move and gathering will never be achieved.
A contradiction.

\end{proof}


\subsubsection{Algorithm {\sc Gather($\not$Cross,$\not$Chir)}: {\sf Phase 1}}

As we know, the lack of cross detection is not a problem when there is a common chirality.
However, the  combination of   lack of both cross detection and chirality significantly complicates {\sf Phase 1}, and new mechanisms have to be devised to insure that all agents finish the ring exploration and correctly switch to Phase 2.
 
  In the following we will denote by $\vBtime'$ the value of \vBtime at the previous round, that is at round $\vTtime-1$. 

 \begin{figure}[H]
\footnotesize

\begin{framed}
\begin{algorithmic}
\State States: \{{\sf Init}, {\sf SyncR}, {\sf SyncL}, {\sf Term}\}.
\AtState{Init}
    \State \Explore({\dLeft$|$ \vTtime$ \geq (3n)(n+3) $: {\sf SyncL}; $\vBtime \geq (2n+2) \lor (\vBtime' \geq n+1 \land \pMeeting)$: {\sf Term}})
\AtState{SyncL}
    \State \Explore({\dLeft $|$ $ (\vTtime \geq (3n)(n+3)+2n+1 \land \vBtime >n) \lor \vOthers =\vTotOthers$:  {\sf Term};  $\vTtime \geq (3n)(n+3)+2n+1$: {\sf Phase 2}; $0 < \vBtime \leq n$: {\sf SyncR}})
\AtState{SyncR}
    \State \Explore({\dRight $|$ $  \vOthers =\vTotOthers$:  {\sf Term};  $\vTtime \geq (3n)(n+3)+2n+1$: {\sf Phase 2}; $\vBtime=1$: {\sf SyncL}})
\END
\end{algorithmic}
\end{framed}	
\caption{  {\sf Phase 1} of  Algorithm 
{\sc Gather($\not$Cross,$\not$Chir)}  
\label{algp1:gatheringnochirnocross}}
\end{figure}

%
%
%

Each agent  attempts to move along the ring in its own  left direction. 
An agent terminates in the {\sf Init} state if it has been blocked long enough ($\vBtime \geq 2n+2$), 
or if it was blocked for an appropriate amount of time and is now meeting a new agent ($\vBtime' \geq n+1 \land \pMeeting$). 
If an agent does not terminate by round $(3n)(n+3)$
it enters the {\em sync sub-phase} that lasts $2n$ rounds; this syntonization step  is  
used to ensure that, if a group of agents terminates in the {\sf Init} state  by condition  ($\vBtime \geq 2n+2$), all the remaining active agents will terminate correctly in this  sub-phase.

An agent with $\vBtime=0$ or $\vBtime> n $  starts  the {\em sync sub-phase} in state  {\sf SyncL}.
Instead, an agent with $0 < \vBtime \leq n$  starts in state  {\sf SyncR} and resets $\vBtime$ to zero.    
%
    In the successive rounds, when/if  an agent is blocked  in state  {\sf SyncR} (resp.  {\sf SyncL})  
    it   switches direction, changes  state to  {\sf SyncL} (resp.  {\sf SyncR}),
     resetting the variable   $\vBtime$ to 0.
 The agent terminates if it either detects $k$ agents at its current node, or if it never moved ($\vEsteps = 0$)
by round $(3n)(n+3)+2n+1$. Otherwise,  at  round $(3n)(n+3)+2n+1$ it starts {\sf Phase 2}.

\begin{observation}
\label{obs:Ph1}
If an edge is missing for $3n$ consecutive rounds, between rounds $0$ and $(3n)(n+3)$, then all agents terminate. Therefore, if an agent has not terminated by round $(3n)(n+3)$, then it has done a complete tour of the ring. 
\end{observation}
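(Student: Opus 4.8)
The plan is to establish the two assertions of Observation~\ref{obs:Ph1} separately, relying only on $1$-interval connectivity and on the code of state \textsf{Init}. The forward implication is a ``funnelling'' argument showing that a long absence of one edge drives every agent into a terminating condition; the ``therefore'' is a pigeonhole argument on the number of successful edge traversals.

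\textbf{An edge missing for $3n$ consecutive rounds forces everybody to terminate.} Fix an interval of $3n$ consecutive rounds, contained in $[0,(3n)(n+3))$, during which some edge $e$ is absent. Because the interval ends strictly before round $(3n)(n+3)$, every agent that has not yet terminated is still in state \textsf{Init} throughout, so it keeps attempting to move in its own left direction and never reverses. By $1$-interval connectivity, $e$ is the \emph{only} edge that may be missing during this window; hence every move an agent attempts across an edge other than $e$ succeeds, so from the first round of the window each surviving agent advances one node per round (in whichever physical direction its orientation dictates) until it reaches the endpoint of $e$ from which its next step would cross $e$. Reaching that endpoint costs at most $n-1$ rounds, so within $n-1$ rounds of the start of the window every surviving agent is blocked at one of the two endpoints of $e$. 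From then on it neither moves nor changes direction, so \vBtime\ is never reset and increases by one each round; a careful accounting of the constants (the $\le n-1$ rounds spent walking up to $e$, the length $3n$ of the window, and the threshold $\vBtime\ge 2n+2$) shows the terminating condition of \textsf{Init} is reached, while the agents that catch up to an endpoint already occupied by a stuck group are caught instead by the secondary clause $(\vBtime'\ge n+1\wedge\pMeeting)$, which fires on the already-blocked agents. Agents that terminated before the window are of course already done, and nothing in this argument needs the agents to share an orientation; so all agents terminate.

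\textbf{Not terminated by round $(3n)(n+3)$ implies a complete tour.} Suppose agent $a$ has not terminated by round $R:=(3n)(n+3)$. Then $a$ has been in \textsf{Init} throughout $[0,R)$, moving in a fixed physical direction, and has performed some number $m$ of successful edge traversals in that interval. These $m$ traversals cut $[0,R)$ into at most $m+1$ maximal \emph{blocking intervals} --- runs of consecutive rounds during which $a$ waits in the same outgoing buffer for the same missing edge --- whose lengths add up to $R-m$. If $a$ had not completed a full tour, then $m\le n-1$, so there are at most $n$ blocking intervals whose lengths sum to at least $R-(n-1)=3n^2+8n+1$; by pigeonhole one of them has length at least $3n+9>2n+2$. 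During that interval the variable \vBtime\ climbs above $2n+2$ while $a$ is in \textsf{Init}, and this interval lies entirely inside $[0,R)$, so $a$ would terminate strictly before round $R$ --- contradicting the hypothesis. Hence $m\ge n$, i.e.\ $a$ has done a complete tour of the ring.

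\textbf{Main obstacle.} The delicate part is the constant bookkeeping in the forward implication: the $3n$-length window, the threshold $2n+2$, the threshold $n+1$ of the secondary clause, and the at most $n-1$ rounds an agent may spend approaching the missing edge are tightly matched, and one must check with care that (i) \vBtime\ is genuinely never reset during the blocked phase (no traversal, no direction change, no reset-triggering event occurs), (ii) an agent of either private orientation is funnelled to one of the two --- and only two --- endpoints of the unique missing edge and then remains there, and (iii) an agent that joins a group already blocked at such an endpoint is still forced into \textsf{Term}: this last case is precisely what the $(\vBtime'\ge n+1\wedge\pMeeting)$ clause of state \textsf{Init} is designed for. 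By contrast, the ``therefore'' direction is a clean pigeonhole count and is the easy half.
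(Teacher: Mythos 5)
The paper states this Observation without proof, so there is no official argument to compare against; judged on its own, your decomposition --- a funnelling argument for the first sentence and a pigeonhole count for the second --- is the right architecture, and the pigeonhole half is essentially sound. The genuine gap is that the first half, which you yourself flag as the delicate part, is never actually carried out: you defer to ``a careful accounting of the constants'' without performing it, and when one performs it the margin evaporates. An agent whose private left direction points the long way around may start $n-1$ edges from its endpoint of the missing edge $e$; it spends $n-1$ rounds walking (all other edges being present) and only then begins accumulating \vBtime, so by the last round of the $3n$-round window it has been blocked for only about $2n+1$ rounds --- short of the threshold $2n+2$ --- and if the adversary restores $e$ on the very next round, \vBtime is reset and no termination clause ever fires for that agent. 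Matters get worse under the reading that \vBtime (blocked rounds ``since $r_{ms}$'') restarts whenever a latecomer moving in the same direction joins the blocked group, since such arrivals can occur as late as $n-1$ rounds into the window. So the quantitative claim on which the whole forward implication rests is exactly the one you did not verify, and with these constants it is at best an off-by-one away from failing.

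There is also a more local misstep: you assert that agents which ``catch up to an endpoint already occupied by a stuck group are caught by the secondary clause $(\vBtime'\ge n+1\wedge\pMeeting)$.'' That clause cannot fire for the newcomer, whose $\vBtime'$ is $0$ because each successful traversal resets \vBtime; it can only fire for the agents already blocked there (for whom \pMeeting becomes true upon the newcomer's arrival) or, as in the proof of Lemma~\ref{lemma:p1nochirnocross}, for an agent that crosses a long-missing edge and finds a group on the far side. The newcomer must therefore wait for the primary clause $\vBtime\ge 2n+2$, which is precisely the case where the round count above is tightest. Finally, in the pigeonhole half, ``during that interval \vBtime climbs above $2n+2$'' silently assumes no $r_{ms}$-induced resets inside a single blocking interval; this is repairable (all surviving same-direction agents funnel to the blocked endpoint within the first $n$ rounds of that interval, after which \vBtime climbs uninterrupted for at least $3n+9-n>2n+2$ rounds), but the repair needs to be stated, since it is the same subtlety that undermines the forward direction.
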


\begin{lemma}\label{lemma:p1nochirnocross}
If an agent does not terminate at the end of {\sf Phase 1}, then no agent terminates and all of them have done at least one complete loop of the ring. If an agent terminates during {\sf Phase 1}, then all agents terminate and \Gathering  is correctly solved.
\end{lemma}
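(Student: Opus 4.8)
The plan is to follow the template of Lemma~\ref{phase1}, adapting it to the two new difficulties of this setting: crossings are invisible (no cross detection), and the two orientation classes need not behave symmetrically (no chirality). These are handled, respectively, by the two {\sf Init}-state termination triggers ($\vBtime\geq 2n+2$ and $\vBtime'\geq n+1\land\pMeeting$) and by the \emph{sync sub-phase} (states {\sf SyncL}/{\sf SyncR}, rounds $(3n)(n+3)$ through $(3n)(n+3)+2n+1$). I would split the analysis on when --- if at all --- the first agent enters {\sf Term}. The two workhorses throughout are Observation~\ref{obs:Ph1} (no termination by round $(3n)(n+3)$ forces a complete tour of the ring, hence knowledge of $n$, $k$, and the configuration) and Lemma~\ref{observationmovements} (a slow agent pulls its whole direction class into one group).

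First, the case in which no agent is ever in {\sf Term}. By Observation~\ref{obs:Ph1} every agent has completed a loop --- which is exactly the first sentence of the lemma --- and then $\vEsteps\geq n$, so the ``never moved'' clause never fires. The all-or-nothing part of the statement I would obtain for free once the remaining cases are done: I will show there that a single termination forces all terminations, so ``some agent survives to Phase~2'' is equivalent to ``every agent does,'' and the previous sentence applies to all of them.

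Next, the case in which some agent terminates in {\sf Init}, at some round $r<(3n)(n+3)$, either by $\vBtime\geq 2n+2$ or by $\vBtime'\geq n+1\land\pMeeting$; in both cases it has been blocked on a single missing edge $e=(u,v)$ for at least $n+1$ consecutive rounds. Since $e$ is then the only absent edge, the ring minus $e$ is a path of $n-1$ edges, so every other agent --- which in {\sf Init} moves every round in a fixed direction --- reaches $u$ or $v$ within $n-1$ rounds and stays blocked; hence at round $r$ all agents lie on the two endpoints of $e$, which is already a non-strict gathering. It remains to push every agent into {\sf Term} without breaking this, which I would do by a case analysis on the adversary's schedule of the single removable edge from round $r$ on: if $e$ stays absent, each still-active agent's $\vBtime$ reaches $2n+2$ within ${\cal O}(n)$ more rounds; if the adversary restores $e$, the agents on each endpoint share a direction, so I can describe the resulting (undetected) crossings, and the trigger $\vBtime'\geq n+1\land\pMeeting$ catches the long-blocked agents that cross onto an endpoint already holding a terminated group, while agents that thereby disperse re-enter $e$'s endpoints (or the terminated pile) later and are swept up in the sync sub-phase via $\vOthers=\vTotOthers$, or via $\vBtime>n$ in {\sf SyncL} at the final round $(3n)(n+3)+2n+1$. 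Lemma~\ref{observationmovements} is invoked repeatedly to keep each direction class bunched while this plays out, and every termination is shown to occur with all agents on one node or on two adjacent nodes.

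Finally, if the first termination is in the sync sub-phase, then by Observation~\ref{obs:Ph1} every agent looped and entered it; there each active agent flips direction and resets $\vBtime$ whenever blocked, so no group is held back, and a Lemma~\ref{observationmovements}-style bunching argument shows that within the $2n$ rounds of the sub-phase any two oppositely-moving groups come to distance $\leq 1$, whence either one enters the other's node (then $\vOthers=\vTotOthers$, all terminate gathered at a node) or they straddle the lone missing edge (then they terminate by $\vBtime>n$ at the last round, gathered at its two endpoints); an agent blocked on one edge for the whole execution is caught by the ``never moved'' clause, and as before this already means all agents sit on that edge's two endpoints. The main obstacle is the {\sf Init}-termination case together with the all-or-nothing claim: because crossings are invisible and the two orientation classes are asymmetric, one must verify that the local triggers are \emph{jointly exhaustive} --- the adversary cannot fire one for some agent while withholding all of them from another --- \emph{and} that whenever a trigger fires the whole agent set occupies a single node or a single edge, so the termination is a correct gathering. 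Pinning down the ${\cal O}(n)$-round gap between the first and last termination, and checking that no schedule of the single removable edge (including restorations that make the clustered groups cross and scatter) can separate the agents by more than one edge at the terminating round, is where the constants $2n+2$, $n+1$, $2n$ and $(3n)(n+3)$ are tuned to make the bounds close, and where the bulk of the case analysis lives; the ${\cal O}(n^2)$ horizon for Phase~1 is just the round index $(3n)(n+3)+2n+1$ of its last round.
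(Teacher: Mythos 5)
Your overall decomposition (no termination / first termination in {\sf Init} / first termination in the sync sub-phase) matches the paper's, and your central observation for the {\sf Init} case --- that either trigger presupposes at least $n+1$ consecutive rounds of blocking on a single edge $e$, which by 1-interval connectivity drives every agent onto the two endpoints of $e$ before the trigger fires --- is exactly the engine of the paper's argument (the paper phrases it as the opposite-direction group having $\vBtime\geq 2n+2-(n+1)=n+3$ at the terminating round). The paper also makes explicit one point you leave implicit: a termination by $\vBtime'\geq n+1\land\pMeeting$ is only consistent with the met agents being already terminated, since active agents on the far endpoint would be sitting in the outgoing buffer and would cross rather than meet.

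The genuine gap is in your account of the sync sub-phase, which rests on a misreading of the {\sf SyncL}/{\sf SyncR} dynamics. You write that ``each active agent flips direction and resets $\vBtime$ whenever blocked, so no group is held back,'' and later that two groups straddling the missing edge ``terminate by $\vBtime>n$ at the last round.'' These two claims are incompatible: a flip resets $\vBtime$ to $0$, and since the flip fires as soon as $0<\vBtime\leq n$, an agent that has flipped even once can never again reach $\vBtime>n$ within the sub-phase. The paper's proof turns precisely on this: the only agents that can terminate by $\vBtime>n$ are those that entered {\sf SyncL} already blocked for more than $n$ rounds and never moved afterwards, so the corresponding edge has been missing for the whole sub-phase; the remaining agents do \emph{not} converge pairwise as in your ``bunching'' picture (there is no anti-crossing mechanism here, so oppositely-moving groups simply pass through each other undetected), but instead bounce between the two endpoints of the stuck edge, reversing and going the long way around each time they are blocked with small $\vBtime$. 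One must track this oscillation to show that, within the $2n$ rounds allotted, every such agent ends up either at the stuck group's node (terminating by $\vOthers=\vTotOthers$) or at the opposite endpoint with $\vBtime>n$ carried over from {\sf Init}. Your sketch neither establishes this nor excludes the bad outcome it would otherwise permit: one group terminating by $\vBtime>n$ while a bouncing group is mid-ring and proceeds to {\sf Phase 2}. The same misreading infects your {\sf Init}-case clean-up (``agents that thereby disperse\dots are swept up\dots via $\vBtime>n$''), although there the paper shows that no dispersal in fact occurs.
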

\begin{proof}
The proof proceeds by considering all possible cases when an agent $a^*$ can terminate during {\sf Phase 1}. 

If $a^*$ terminates at a round $r \leq (3n)(n+3)$, then it is blocked on a missing edge, say at node $v$. 
Also, by definition of state {\sf Init}, either condition $\vBtime \geq 2n+2$ or $\vBtime' \geq n+1 \land \pMeeting$ is satisfied by $a^*$ at round $r$. 

\begin{itemize}
\item If $\vBtime \geq 2n+2$ is verified at round $r$, then all agents with the same direction of movement of $a^*$ are terminated as well at $v$, and for all of them $\vBtime \geq 2n+2$ is verified. Let us consider the agents with direction of movement opposite to that of $a^*$. If there is no such agent, then the  lemma clearly  follows. Otherwise, they form a group, call it $G$, on the other endpoint of the missing edge. Note that, at round $r$, the agents in $G$ have been blocked for at least $2n+2 - (n+1)=n+3$ rounds, hence, at round $r$, for the agents in $G$, (**) $\vBtime \geq n+3$ is verified.

If the agents in $G$ are terminated at round $r$, then the lemma follows. Otherwise, at round $r$, for the agents in $G$, $\vBtime' \geq n+1$ is satisfied (see (**)). If the agents in $G$ do not change state, and if the edge is missing for the next $n+1$ rounds, then agents in $G$ terminate on condition $\vBtime \geq 2n+2$. Otherwise, if the edge comes alive within the next $n+1$ rounds, the agents in $G$ will cross it, meet the (terminated) agents in $v$, and terminate as well. Thus, gathering is correctly achieved, and the lemma follows. On the other hand, if the agents in $G$ switch to the {\sf SyncL} state, two cases can occur: $(a)$ the edge is missing for the next $2n$ rounds: in this case, the agents in $G$ terminate on condition $\vTtime \geq (3n)(n+3)+2n+1 \land \vBtime >n$; $(b)$ the edge comes alive within the next $2n$ rounds: in this case, the agents in $G$ cross it and meet all the other (terminated) agents in $v$. In both cases, the gathering is correctly achieved, and the lemma follows.
 
\item If $\vBtime' \geq n+1 \land \pMeeting$ is verified at round $r$, then all agents with the same direction of movement of $a^*$ are terminated as well at $v$, and for all of them $\vBtime' \geq n+1 \land \pMeeting$ is verified. Let us consider the agents with direction of movement opposite to that of $a^*$. They form a group, call it $G$, on the other endpoint of the missing edge.

First note that, since \pMeeting is verified at round $r$, at the previous round $r-1$, $a^*$ was at $v$'s previous node (according to the chirality of $a^*$), say $v_{i-1}$. 
Moreover, by $\vBtime' \geq n+1$ the edge between $v_{i-1}$ and $v$ is missing at round $r-1$, and the agents in $G$ must have entered in a terminal state by round $r-1$ (Notice  that the agents in $G$ had enough time to reach the other endpoint and enter   the port, therefore if they were not in terminal state at round $r$, then a cross would have occurred at round $r$, hence \pMeeting not satisfied); also, the agents in $G$ terminated on condition $\vBtime \geq 2n+2$. Therefore, at round $r$, gathering is correctly achieved at $v$, and the lemma follows.
\end{itemize}
It remains to prove the correctness of the termination of $a^*$, say at node $v$, in a round $r > (3n)(n+3)$, that is during the sync sub-phase. In this case, by Observation~\ref{obs:Ph1}, all agents know $k$. The correctness of the termination when condition $\vOthers=k$ holds is trivial; thus, let us consider the case when termination occurs because $\vTtime \geq (3n)(n+3)+2n+1 \land \vBtime >n$ holds.

Let $G$ be the group of agents that terminates for this condition. 
Notice  that  this  group must have never left  the {\sf SyncL} state.
In fact, any agent that enters state   {\sf SyncR} at any time
during the sub-phase will never have  $\vBtime   > n$ for the rest of the sub-phase 
even when it becomes  {\sf SyncL}; thus,
the only agents with $\vBtime   > n$  at time $ (3n)(n+3)+2n+1$
are those that entered the
  {\sf SyncL} state with   $\vBtime   > n$ and never switched.
This implies that the edge on which $G$ terminates has been  missing  for the whole execution of the sub-phase.

At round  $(3n)(n+3)+n$ all agents with direction of movements opposite to the one of the agents in $G$ are in another group $G'$ on the other endpoint of the missing edge.
If the agents in $G'$ are already terminated, the termination of the agents in $G$ correctly solves gathering, and the lemma follows. 
 If instead the agents in $G'$ are not terminated, but they are also in state  {\sf SyncL} with $\vBtime >n$  then, they will also terminate for $\vTtime \geq (3n)(n+3)+2n+1 \land \vBtime >n$, therefore gathering is correctly achieved, and the lemma follows. Otherwise, the agents are either in state {\sf SyncL}   or  {\sf SyncR}  with  $\vBtime  \leq n$, therefore they will change direction at round $(3n)(n+3)+n+1$ in the next $n$ rounds they will  move towards $G$, and will reach group $G$; when this happens, gathering will be correctly achieved (on condition $\vOthers=k$), and the lemma follows.

\end{proof}

\subsubsection{Algorithm {\sc Gather($\not$Cross,$\not$Chir)}: {\sf Phase 2}}

By Lemma~\ref{lemma:p1nochirnocross}, at the end of {\sf Phase 1} each agent knows the current configuration. Since we know that the problem is not solvable
for  initial configurations   ${{\cal C} \in \cal E}$ (Theorem \ref{th:nocrossimposs}), the initial configuration must be 
non-symmetric (i.e., without any axis of symmetry) or symmetric but with the unique axis of symmetry   going through a node. 
In both cases, the agents can agree on a common chirality. In fact,  if ${\cal C}$ does not have any symmetry axes, the agents  can agree, for example, on the direction of the lexicographically smallest sequence of homebases inter distances.
If instead there is
an  axis of symmetry   going through a node  $v_L$, they can agree on the direction of  the port of $v_L$
with the  smallest label.

We can then use as Phase 2, the one of Algorithm    {\sc Gather($\not$Cross,Chir)}   presented in   Section~\ref{sec:chiralitynocrossp2}.

\begin{theorem} 
\label{NoChirNoCross}
Without chirality, \Gathering is solvable in rings of known size without cross detection from all  $C\in {\cal C}\setminus({\cal P} \cup {\cal E})$. Moreover, there exists an algorithm solving \Gathering that terminates in ${\mathcal O(n^2)}$ rounds for any $C\in  {\cal C}\setminus({\cal P} \cup {\cal E})$,  if $C\in {\cal P} \cup {\cal E}$ the algorithm either solves \Gathering or it detects that the configuration is in ${\cal P} \cup {\cal E}$. 
 \end{theorem}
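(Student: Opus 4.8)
The plan is to assemble Theorem~\ref{NoChirNoCross} from the two phase-level lemmas already established for Algorithm {\sc Gather($\not$Cross,$\not$Chir)}, together with the impossibility result of Theorem~\ref{th:nocrossimposs} and the general feasibility boundary recorded in Property~\ref{th:noPeriodic}. First I would observe that every claim in the theorem splits cleanly along the partition of initial configurations: $C \in {\cal C}\setminus({\cal P}\cup{\cal E})$ (feasibility plus the time bound), $C\in{\cal E}$ (detection of impossibility), and $C\in{\cal P}$ (either gathering or detection of periodicity). So the proof is a case analysis driven by what the agents learn during {\sf Phase 1}.

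For $C\in {\cal C}\setminus({\cal P}\cup{\cal E})$ the argument is a direct composition. By Lemma~\ref{lemma:p1nochirnocross}, either some agent terminates during {\sf Phase 1}---in which case all agents terminate and gathering is correctly solved, and we are done, with a round count bounded by $(3n)(n+3)+2n+1 = O(n^2)$---or no agent terminates and all agents have completed a full loop of the ring, hence each agent knows the exact positions of all $k$ homebases, i.e.\ knows $C$. In the latter case {\sf Phase 2} is invoked. Here I would invoke the discussion preceding the theorem statement: since $C\notin{\cal E}$, the configuration is either asymmetric or has its unique symmetry axis through a node, so by Property~\ref{election} (in fact by the explicit canonical construction) the agents can deterministically agree on a leader node and on a common chirality (smallest inter-distance sequence, or smallest port label at $v_L$). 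This is precisely the hypothesis under which {\sf Phase 2} of Algorithm {\sc Gather($\not$Cross,Chir)} operates, so Lemma~\ref{lbl:lemmaChir} applies verbatim: {\sf Phase 2} terminates in $O(n\log n)$ rounds and correctly solves gathering. Adding the two phases gives an $O(n^2) + O(n\log n) = O(n^2)$ bound, establishing the feasibility and complexity claims.

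For the impossibility part, suppose $C\in{\cal P}$: then by Property~\ref{th:noPeriodic} gathering is unsolvable, but by Lemma~\ref{lemma:p1nochirnocross} either some agent terminates in {\sf Phase 1} (which, by the lemma, can only happen with gathering correctly achieved---so in fact the ``unsolvable'' case of a periodic $C$ that happens to reach termination in {\sf Phase 1} would contradict Property~\ref{th:noPeriodic}, hence no agent terminates in {\sf Phase 1} when $C\in{\cal P}$) or every agent completes a loop and thus learns $C$; in the latter case the very first line of {\sf Phase 2} tests $C\in{\cal P}$ and, if so, calls {\sf unsolvable}() and halts, so the impossibility is detected. If $C\in{\cal E}$, then by Theorem~\ref{th:nocrossimposs} gathering is impossible; again by Lemma~\ref{lemma:p1nochirnocross} no agent terminates in {\sf Phase 1} (termination there would mean gathering was solved), so all agents learn $C$, recognize it as double-palindrome edge-edge, and report impossibility. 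This covers the statement that for $C\in{\cal P}\cup{\cal E}$ the algorithm either solves gathering (vacuously, since it cannot) or detects membership in ${\cal P}\cup{\cal E}$.

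The main obstacle---and the place where care is needed---is the interface between the two phases: one must verify that whenever {\sf Phase 1} does not terminate, the knowledge held by every agent upon entering {\sf Phase 2} is exactly the global configuration $C$ (so that the leader/chirality agreement is unambiguous and identical for all agents), and that all agents enter {\sf Phase 2} synchronously at round $(3n)(n+3)+2n+1$. Both facts are supplied by Observation~\ref{obs:Ph1} and Lemma~\ref{lemma:p1nochirnocross}, so rigorously this reduces to citing them, but it is the conceptual hinge of the composition. A secondary subtlety is confirming that Lemma~\ref{observation2} and Lemma~\ref{lbl:lemmaChir}---stated for the chirality setting---remain applicable once the agents have \emph{derived} a common chirality from $C$ rather than being endowed with one; since after the derivation every agent uses the same orientation, the executions are indistinguishable from the genuine-chirality case, and no modification of those lemmas' proofs is required.
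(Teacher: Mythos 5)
Your proposal is correct and follows essentially the same route as the paper: the theorem is obtained by composing Lemma~\ref{lemma:p1nochirnocross} (correctness and $O(n^2)$ bound of {\sf Phase~1}), Lemma~\ref{lbl:lemmaChir} (correctness and $O(n\log n)$ bound of {\sf Phase~2} once a common chirality is derived from the known configuration), and the observation that agents who complete {\sf Phase~1} without gathering know $C$ and can therefore detect membership in ${\cal P}\cup{\cal E}$. Your treatment is in fact more explicit than the paper's about the phase interface and the derived-chirality subtlety, but no new ideas are needed.
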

\begin{proof}
By Lemma \ref{lemma:p1nochirnocross}, it follows the correctness and the ${\cal O}(n^2)$ bound of {\sf Phase 1}. The correctness and complexity of {\sf Phase 2}, follows by the Lemma \ref{lbl:lemmaChir} of Section \ref{sec:chiralitynocrossp2}. The last statement of the theorem is obvious by Lemma \ref{lemma:p1nochirnocross}, if at the end of {\sf Phase 1} the gathering is not solved, then agents know ${\cal C}$, therefore they can detect if the configuration is in  ${\cal P} \cup {\cal E}$.
\end{proof}

 \section{Conclusion}
  
  In this paper we have investigated the problem of \Gathering in a dynamic rings. When $n$ is known, we presented a complete characterisation on the initial configurations where \Gathering is solvable, with and without chirality and with and without the capability to detect agents crossing. Interestingly, in such dynamic setting the knowledge of $n$ cannot be trade-off with the knowledge of $k$, this is in contrast with the known results for \Gathering in static rings. An open problem is to investigate the complexity gap between the algorithms that solve \Gathering with cross detection and the algorithms that do not use cross detection. Our non-crossing technique introduces a complexity of ${\cal O}(n \log n)$ rounds, it would be interesting to show if such $\log n$ factor is necessary or not. 
  \color{black}

\newpage

\end{document}